\newcommand\myshade{70}
\theoremstyle{plain}
\newtheorem{theorem}{Theorem}
\newtheorem{corollary}[theorem]{Corollary}
\newtheorem{lemma}[theorem]{Lemma}
\theoremstyle{definition}
\newtheorem{definition}{Definition}
\newtheorem{example}[definition]{Example}
\newtheorem{remark}[definition]{Remark}
\newcommand{\FF}{\mathbb{F}}
\newcommand{\BB}{\mathbb{B}}
\newcommand{\GF}{{\mathrm{GF}}}
\DeclareMathAlphabet{\mathbfsl}{OT1}{ppl}{b}{it} 
\newcommand{\cS}{\mathcal{S}}
\newcommand{\cA}{\mathcal{A}}
\newcommand{\cC}{\mathcal{C}}
\newcommand{\cF}{\mathcal{F}}
\newcommand{\cT}{\mathcal{T}}
\newcommand{\cU}{\mathcal{U}}
\newcommand{\cW}{\mathcal{W}}
\newcommand{\cV}{\mathcal{V}}
\newcommand{\cY}{\mathcal{Y}}
\newcommand{\cI}{\mathcal{I}}
\newcommand{\vlambda}{{\pmb{\lambda}}}
\newcommand{\rs}{{\rm RS}}
\newcommand{\grs}{{\rm GRS}}
\newcommand{\tr}{{\rm Tr}}
\newcommand{\vspan}{{\rm span}}
\newcolumntype{Y}{>{\centering\arraybackslash}X} 
\title{Trace Repair Never Loses to Classical Repair:\\ Exact and Explicit Helper Nodes Selection 
}
\author{

   \IEEEauthorblockN{Wilton Kim, Stanislav Kruglik, Han Mao Kiah}
	
}
\begin{document}
\date{}

\maketitle

\begin{abstract}
Repairing Reed-Solomon codes with low bandwidth is a central challenge in distributed storage. Following the trace-repair framework of Guruswami and Wootters (2017), recent works by Lin (2023) and Liu--Wan--Xing (2024) provided significant improvements in bandwidth using two distinct ideas. Lin constructed a trace-repair scheme that requires no contribution from a set of predetermined nodes $\cS$, while Liu--Wan--Xing identified linear dependencies among the downloaded traces, relating the number of dependent traces to the dimension of a subspace $\cW_k$.
In this work, we fully utilize and unify these ideas. We compute the exact dimension of $\cW_{k,\cS}$ (a generalization of $\cW_k$). We identify the trade-off between the set size $|\cS|$ and the dimension $\dim(\cW_{k,\cS})$. We provide an algorithm to find the combination that results in the lowest bandwidth. Furthermore, we provide an explicit choice of the helper nodes for the repair. Finally, we prove that our optimized scheme never loses to the classical repair scheme, establishing a bandwidth guarantee of at most $k\log|\FF|$ bits for all dimension $k$ and field $\FF$, whenever the trace repair is applicable.
\end{abstract}
\begin{IEEEkeywords}
Reed–Solomon codes, distributed storage, trace repair, single erasure repair, repair bandwidth reduction.
\end{IEEEkeywords}
\section{Introduction}

\IEEEPARstart{R}{eed-Solomon} (RS) codes \cite{reedsolomon} are widely used in distributed storage because all information symbols can be recovered by downloading any $k$ available code symbols (see \cite{dinh2022practical} for a survey). More precisely, let $\FF=\GF(p^{mt})$ and let $\cA=\{\alpha_1,\ldots,\alpha_n\}\subseteq\FF$ be $n$ distinct evaluation points. 
Given $k$ information symbols $\boldsymbol{x}\in\FF^k$, we encode them as a polynomial $c$ of degree at most $k-1$ and store $\boldsymbol{c}=(c(\alpha))_{\alpha\in\cA}$. 
The MDS property then states that any $k$ coordinates of $\boldsymbol{c}$ uniquely determine $c$ and hence $\boldsymbol{x}$. Equivalently, at most $n-k$ erasures can be corrected by downloading any $k$ surviving code symbols. In this paper, we call this the \emph{classical} repair scheme. Its bandwidth is $k\log|\FF|$ bits, since each downloaded symbol lies in $\FF$. While the classical repair scheme is optimal to fully reconstruct a codeword with $n-k$ erasures, it is excessive when there is only one erased code symbol.

In \cite{guruswamiwooters2017}, Guruswami and Wootters proposed a repair scheme for {\em a single erased code symbol} $c(\alpha^*)$ by utilizing the trace function $\tr:\FF\to\BB$ for some base field $\BB=\GF(p^m)$. 
Specifically, we download $n-1$ traces of the form $(\tr(\lambda_\alpha c(\alpha)/(\alpha-\alpha^*)))_{\alpha\in\cA\setminus\{\alpha^*\}}$ for some $\lambda_\alpha\in\FF$. 
This then results in a bandwidth of $(n-1)\log|\BB|$ bits. In terms of bandwidth, the Guruswami-Wootters scheme outperforms the classical scheme only when $k> (n-1)/t$. There is a flurry of works utilizing the Guruswami-Wootters scheme in different setups \cite{Duursma2017, Dau2018, Dau2018io, Mardia2019, Tamo2019, Chen2019, Weiqi2019, Dau2021, Berman2022, kiah2024, dinh2024, kim2024decoding, kruglik2025}. Despite the versatility of trace repair in various settings, a critical gap persists in the standard setup. Specifically, whether there exists a repair scheme that improves upon the classical repair scheme for all values of $k\le n-q^{t-1}$ remains open.


Dau and Milenkovic \cite{Dau2017} derived a lower bound for the repair bandwidth. Recent advancements by Lin \cite{Lin2023} and Liu et al. \cite{Liu2024} have identified two distinct ways to reduce bandwidth:
\begin{itemize}[leftmargin=*]
    \item \textbf{Linear Dependence on Traces.} Liu et al. \cite{Liu2024} identified linear dependency on $d$ downloaded traces in the Guruswami-Wootters scheme. Specifically, Liu \cite{Liu2024} related the number $d$ to the dimension of a subspace $\cW_k$ (see Theorem~\ref{thm:liu} for the exact statement) and in the same paper, provided lower bounds on $\dim(\cW_k)$ for some specific finite fields $\FF$.
    \item \textbf{Zero-forcing.} Lin \cite{Lin2023} proposed to consider a polynomial $g(x)$ that vanishes at some predetermined evaluation points in  $\cS$. By multiplying $g(x)$ to the polynomial corresponding to the dual code, it allows us to exclude the code symbols at these evaluation points from the repair scheme entirely.
\end{itemize}
We discuss the previous works further in Section~\ref{sec:prelim}. Although both methods significantly reduce repair bandwidth, they can be further improved through a more comprehensive analysis. Specifically,
\begin{itemize}[leftmargin=*]
    \item Liu et al. \cite{Liu2024} provided only lower bounds on the dimension of $\cW_k$ for some specific finite fields $\FF$, leaving the exact capability of their method unknown. Furthermore, their approach did not incorporate the zero-forcing mechanism introduced by Lin \cite{Lin2023}.
    \item Conversely, Lin \cite{Lin2023} did not exploit the linear dependencies among downloaded traces identified by Liu et al \cite{Liu2024}. Instead, Lin focused entirely on zero-forcing by setting the number of excluded nodes $|\cS|$ to its maximum possible value. However, this choice imposes a strict degree constraints for the parity-check equations. As a result, it is almost impossible to have linear dependency on the downloaded traces.
\end{itemize}

\subsection{Our Contribution}
In this work, we fully utilize the ideas proposed by Lin \cite{Lin2023} and Liu et al. \cite{Liu2024} to further reduce the repair bandwidth. We consider repair problem for a single erased code symbol in a full-length Reed-Solomon code. Without loss of generality, we assume that $c(0)$ is erased. First, we prove the following Theorem~\ref{thm:main_theorem} in Section~\ref{sec:main_idea}.

\begin{theorem}\label{thm:main_theorem}
    Let $\FF = \GF(q^t)$ be an extension field of the base field $\BB = \GF(q)$ for some prime power $q$. Let $n = |\FF|$. Fix $k$ and $\cS$, and consider $(c(\alpha))_{\alpha\in\FF}\in\rs(\FF,k)$ with $c(0)$ being erased. 
Let
    $$\cW_{k,\cS} = \{(h(\alpha))_{\alpha\in\FF}:h(x) = g(x)f(x)/x, f:\FF\to\BB, h(0) = 0, \deg(h(x))\le n-k-1\},$$
    with $g(x) = \prod_{\beta\in\cS} (x-\beta)$ for some nonempty $\cS\subset\FF$ and $g(x) = 1$ if $\cS$ is empty.
    Then, we can repair $c(0)$ with bandwidth of $(n-1-\dim(\cW_{k,\cS})-|\cS|)\log|\BB|$ bits. 
\end{theorem}

Theorem~\ref{thm:main_theorem} guarantees the existence of repair scheme of bandwidth $(n-1-\dim(\cW_{k,\cS})-|\cS|)\log|\BB|$ bits. If $\cS$ is empty, then Theorem~\ref{thm:main_theorem} is reduced to the existence of repair scheme shown in Liu et al. \cite{Liu2024}. Crucially, our work moves beyond existence proof. We summarize our contributions as follows.
\begin{enumerate}[leftmargin=*]
    \item \textbf{Exact Dimension Computation.} Given $k$ and $\cS$, we determine the exact dimension $\dim(\cW_{k,\cS})$ for any finite field $\FF$ by utilizing the relation between \textit{cyclotomic cosets} and polynomials corresponding to $\cW_{k,\cS}$ (Theorem~\ref{thm:dim_of_WkS}).
    \item \textbf{Optimal Trade-off Algorithm.} We generalize the work by Lin \cite{Lin2023} by relaxing the constraint on $|\cS|$. We identify a trade-off between $|\cS|$ and $\dim(\cW_{k,\cS})$. Note that, the number of excluded nodes from the repair scheme is the sum $R_{\cS} = |\cS| + \dim(\cW_{k,\cS})$. Hence, we provide a pruning algorithm (Algorithm~O) that outputs $|\cS|$ and $\dim(\cW_{k,\cS})$ which maximize the sum $R_{\cS}$.
    \item \textbf{Explicit Repair Scheme.} With the output $|\cS|$ and $\dim(\cW_{k,\cS})$ from Algorithm~O, we provide an explicit choice of nodes to exclude. We show that we can still perform the repair despite not downloading from those nodes (Corollary~\ref{cor:nodes_to_download}) and provide the explicit repair procedure (see Algorithm X).
    \item \textbf{Bandwidth Guarantee.} We prove that our trace-repair scheme has bandwidth at most $kt\log|\BB|$ bits (Theorem~\ref{thm:outperform}). This shows that the trace-repair scheme is universally as efficient as, or strictly better than the classical scheme in bandwidth, for all $k$ and finite field $\FF$. Furthermore, our optimized scheme consistently matches or outperforms existing state-of-the-art schemes (see Fig~\ref{fig:algorithm_comparison_vary_k} and Table~\ref{tab:bandwidth_comparison}).
\end{enumerate}

\begin{figure*}[h!]
    \centering

    \begin{subfigure}[b]{0.46\textwidth}
        \centering
        \includegraphics[width=\linewidth]{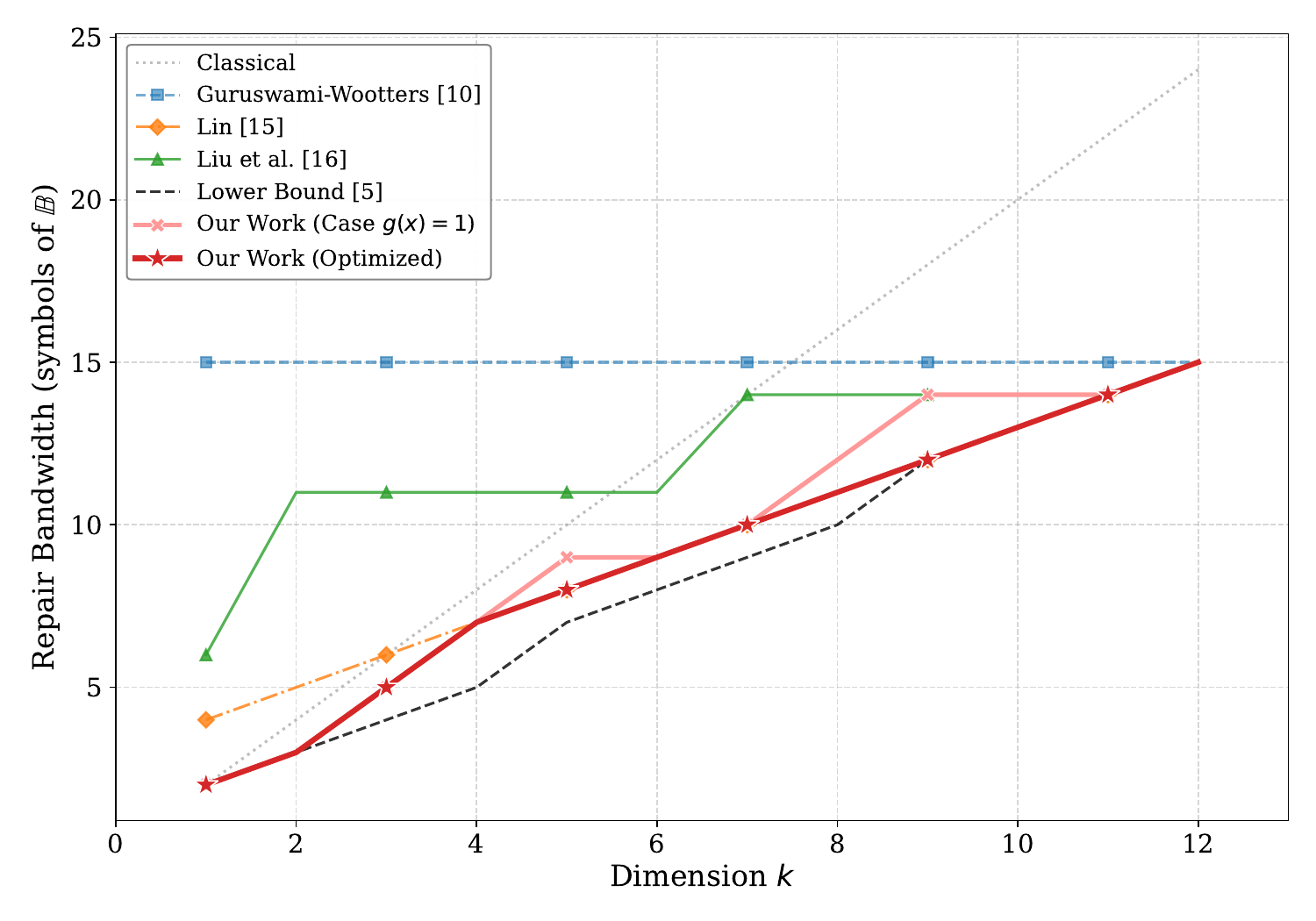}
        \caption{$\FF = \GF(4^2),\BB = \GF(4)$.}
        \label{fig:bandwidth_comparison_4^2}
    \end{subfigure}
    \hfill
    \begin{subfigure}[b]{0.46\textwidth}
        \centering
        \includegraphics[width=\linewidth]{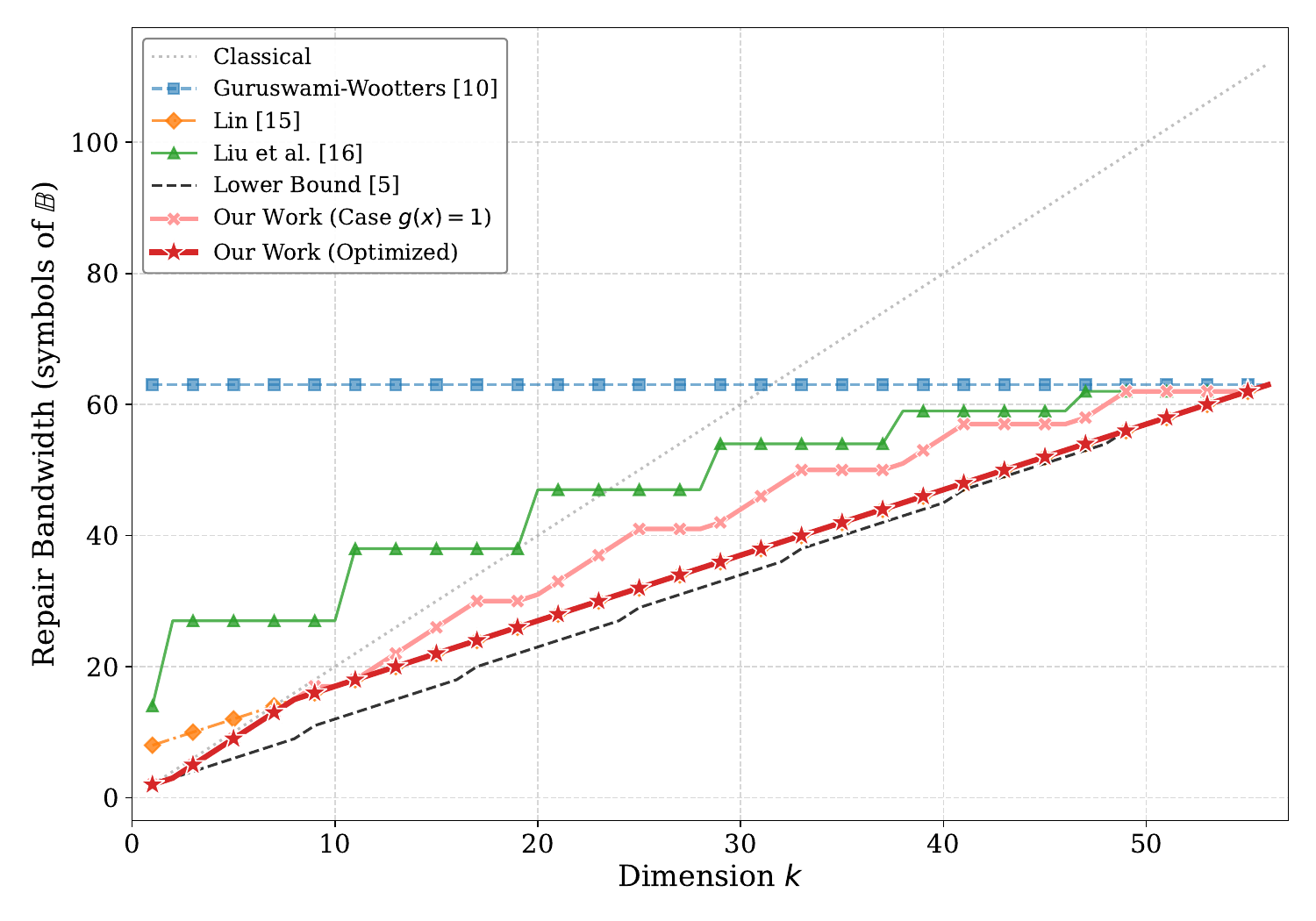}
        \caption{$\FF = \GF(8^2),\BB = \GF(8)$.}
        \label{fig:bandwidth_comparison_8^2}
    \end{subfigure}
    \vspace{3mm}
    
    \begin{subfigure}[b]{\textwidth}
        \centering
        \includegraphics[width = \linewidth]{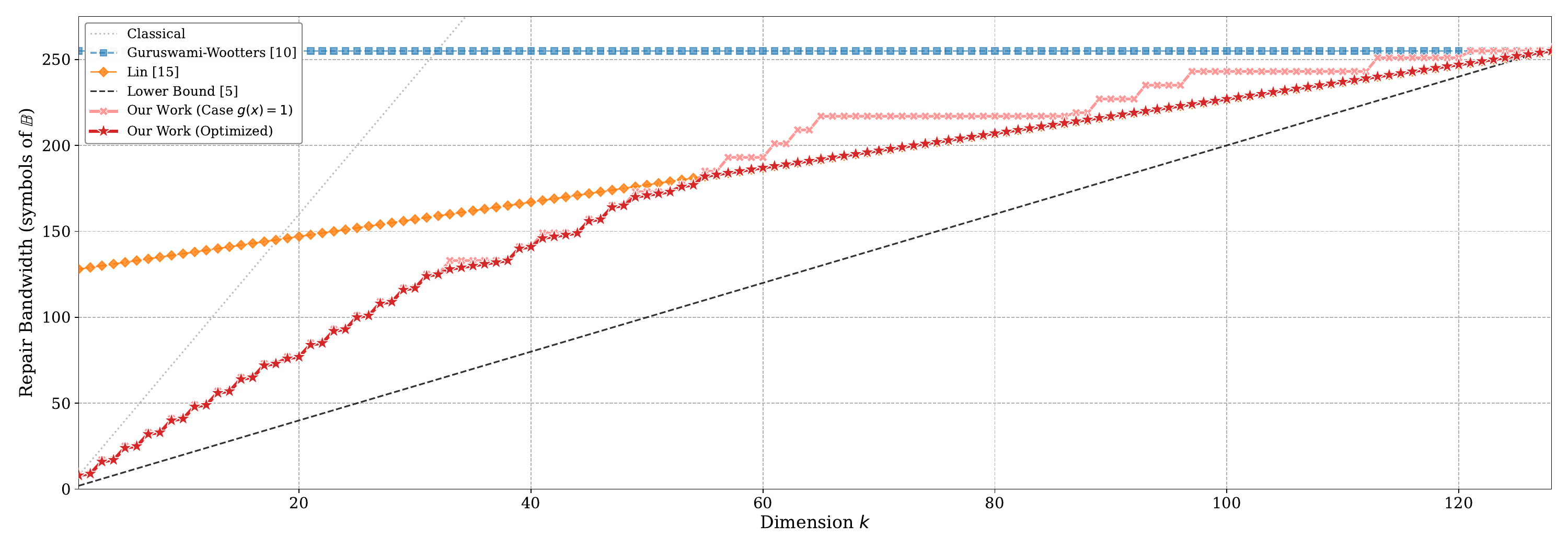}
        \caption{$\FF = \GF(2^8),\BB = \GF(2)$}
        \label{fig:bandwidth_comparison_2^8}
    \end{subfigure}
    \caption{The plots illustrate the repair bandwidth (in base field symbols) required for varying $k$. This shows empirically that our scheme never requires more bandwidth than the classical scheme (gray circles) or the schemes by Lin \cite{Lin2023} and Liu et al. \cite{Liu2024}.}
    \label{fig:algorithm_comparison_vary_k}
\end{figure*}

\begin{table}[h!]
    \centering
    \small
    \caption{Comparison of Repair Bandwidth (in base field symbols $\BB$) when $\FF = \GF(2^8)$ and $\BB = \GF(2)$ for $k=1$ to $k=54$. For $k\ge 55$, our optimized scheme has bandwidth that coincides with Lin~\cite{Lin2023}.}
    \label{tab:bandwidth_comparison}
    
    \begin{tabular}{cc}
        \begin{tabular}[t]{c c c c c c c}
            \toprule
             & \multicolumn{3}{c}{\textbf{Existing Schemes}} & \multicolumn{2}{c}{\textbf{Our Work}} & \\
            \cmidrule(lr){2-4} \cmidrule(lr){5-6}
            $k$ & Clas. & GW & Lin & $g(x)=1$ & Opt. & LB \\
            \midrule
            1 & 8 & 255 & 128 & 8 & 8 & 2 \\
            2 & 16 & 255 & 129 & 9 & \textbf{9} & 4 \\
            3 & 24 & 255 & 130 & 17 & \textbf{16} & 6 \\
            4 & 32 & 255 & 131 & 17 & \textbf{17} & 8 \\
            5 & 40 & 255 & 132 & 25 & \textbf{24} & 10 \\
            6 & 48 & 255 & 133 & 25 & \textbf{25} & 12 \\
            7 & 56 & 255 & 134 & 33 & \textbf{32} & 14 \\
            8 & 64 & 255 & 135 & 33 & \textbf{33} & 16 \\
            9 & 72 & 255 & 136 & 41 & \textbf{40} & 18 \\
            10 & 80 & 255 & 137 & 41 & \textbf{41} & 20 \\
            11 & 88 & 255 & 138 & 49 & \textbf{48} & 22 \\
            12 & 96 & 255 & 139 & 49 & \textbf{49} & 24 \\
            13 & 104 & 255 & 140 & 57 & \textbf{56} & 26 \\
            14 & 112 & 255 & 141 & 57 & \textbf{57} & 28 \\
            15 & 120 & 255 & 142 & 65 & \textbf{64} & 30 \\
            16 & 128 & 255 & 143 & 65 & \textbf{65} & 32 \\
            17 & 136 & 255 & 144 & 73 & \textbf{72} & 34 \\
            18 & 144 & 255 & 145 & 73 & \textbf{73} & 36 \\
            19 & 152 & 255 & 146 & 77 & \textbf{76} & 38 \\
            20 & 160 & 255 & 147 & 77 & \textbf{77} & 40 \\
            21 & 168 & 255 & 148 & 85 & \textbf{84} & 42 \\
            22 & 176 & 255 & 149 & 85 & \textbf{85} & 44 \\
            23 & 184 & 255 & 150 & 93 & \textbf{92} & 46 \\
            24 & 192 & 255 & 151 & 93 & \textbf{93} & 48 \\
            25 & 200 & 255 & 152 & 101 & \textbf{100} & 50 \\
            26 & 208 & 255 & 153 & 101 & \textbf{101} & 52 \\
            27 & 216 & 255 & 154 & 109 & \textbf{108} & 54 \\
            \bottomrule
        \end{tabular} &
        
        \begin{tabular}[t]{c c c c c c c}
            \toprule
             & \multicolumn{3}{c}{\textbf{Existing Schemes}} & \multicolumn{2}{c}{\textbf{Our Work}} & \\
            \cmidrule(lr){2-4} \cmidrule(lr){5-6}
            $k$ & Clas. & GW & Lin & $g(x)=1$ & Opt. & LB \\
            \midrule
            28 & 224 & 255 & 155 & 109 & \textbf{109} & 56 \\
            29 & 232 & 255 & 156 & 117 & \textbf{116} & 58 \\
            30 & 240 & 255 & 157 & 117 & \textbf{117} & 60 \\
            31 & 248 & 255 & 158 & 125 & \textbf{124} & 62 \\
            32 & 256 & 255 & 159 & 125 & \textbf{125} & 64 \\
            33 & 264 & 255 & 160 & 133 & \textbf{128} & 66 \\
            34 & 272 & 255 & 161 & 133 & \textbf{129} & 68 \\
            35 & 280 & 255 & 162 & 133 & \textbf{130} & 70 \\
            36 & 288 & 255 & 163 & 133 & \textbf{131} & 72 \\
            37 & 296 & 255 & 164 & 133 & \textbf{132} & 74 \\
            38 & 304 & 255 & 165 & 133 & \textbf{133} & 76 \\
            39 & 312 & 255 & 166 & 141 & \textbf{140} & 78 \\
            40 & 320 & 255 & 167 & 141 & \textbf{141} & 80 \\
            41 & 328 & 255 & 168 & 149 & \textbf{146} & 82 \\
            42 & 336 & 255 & 169 & 149 & \textbf{147} & 84 \\
            43 & 344 & 255 & 170 & 149 & \textbf{148} & 86 \\
            44 & 352 & 255 & 171 & 149 & \textbf{149} & 88 \\
            45 & 360 & 255 & 172 & 157 & \textbf{156} & 90 \\
            46 & 368 & 255 & 173 & 157 & \textbf{157} & 92 \\
            47 & 376 & 255 & 174 & 165 & \textbf{164} & 94 \\
            48 & 384 & 255 & 175 & 165 & \textbf{165} & 96 \\
            49 & 392 & 255 & 176 & 173 & \textbf{170} & 98 \\
            50 & 400 & 255 & 177 & 173 & \textbf{171} & 100 \\
            51 & 408 & 255 & 178 & 173 & \textbf{172} & 102 \\
            52 & 416 & 255 & 179 & 173 & \textbf{173} & 104 \\
            53 & 424 & 255 & 180 & 177 & \textbf{176} & 106 \\
            54 & 432 & 255 & 181 & 177 & \textbf{177} & 108 \\
            \bottomrule
        \end{tabular}
    \end{tabular}
\end{table}
\section{Preliminaries}\label{sec:prelim}

Let $[n]$ denote the set $\{1,2,\ldots,n\}$ and $[a,b]$ denote the set $\{a,a+1,\ldots, b\}$. Let $\BB$ be the finite field of size $q = p^m$ for some prime $p$ and let $\FF$ be its extension field of degree $t\ge 1$. Let $\FF^* = \FF\setminus\{0\}$. Let $\{u_1,\ldots,u_t\}$ be a basis of $\FF$ over $\BB$. We use $\FF[x]$ to denote the ring of polynomials over the finite field $\FF$.

We denote the {\em dual} of the code $\cC$ by $\cC^\perp$, and so, 
for each $\boldsymbol{c} = (c_1,\ldots,c_n)\in\cC$ and $\boldsymbol{c}^{\perp} = (c_1^\perp,\ldots,c_n^\perp)\in\cC^\perp$, it holds that $\sum_{i=1}^nc_ic_i^\perp = 0$. In this work, we focus on the ubiquitous Reed-Solomon code.

\begin{definition}
    The \textit{Reed-Solomon} code $\rs(\cA,k)$ over finite field $\FF$ of dimension $k$ with evaluation points $\cA\subseteq\FF$ is defined as
    $$
    \rs(\cA,k)\triangleq \{(c(\alpha))_{\alpha\in\cA}:c\in\FF[x],\deg(c(x))\le k-1\},
    $$
    while the \textit{generalized Reed-Solomon code} $\grs(\cA,k,\boldsymbol{\lambda})$ of dimension $k$ with evaluation points $\cA\subseteq \FF$ and multiplier vector $\boldsymbol{\lambda}\in(\FF\setminus\{0\})^n$ is defined as:
    $$
    \grs(\cA,k,\boldsymbol{\lambda}) \triangleq \{(\lambda_\alpha r(\alpha))_{\alpha\in\cA}:r\in\FF[x],\deg(r)\le n-k-1\}.
    $$
\end{definition}
\noindent It is well known (see~\cite{roth2006}) that the dual of $\rs(\cA,k)$ is $\grs(\cA,|\cA|-k,\vlambda)$ for some $\vlambda=(\lambda_{\alpha})_{\alpha\in \cA}$.
Furthermore, when $\cA = \FF$, we have $\lambda_\alpha = 1$ for all $\alpha\in\cA$.

Given a codeword $(c(\alpha))_{\alpha\in\cA}\in\rs(\cA,k)$, we can repair any $n-k$ erased code symbols by downloading any $k$ available code symbols. Guruswami and Wootters \cite{guruswamiwooters2017} proposed a repair scheme for a single erased code symbol which utilizes all $|\cA|-1$ available code symbols but downloads only parts of the information, resulting in a bandwidth of $(|\cA|-1)\log|\BB|$ bits. This improves the classical scheme for $k>(|\cA|-1)/t$. A lower bound for the repair bandwidth was established in \cite{Dau2017} and efforts to approach this lower bound were done by Lin \cite{Lin2023} and Liu et al. \cite{Liu2024}. We summarize the lower bound for the repair bandwidth in Theorem~\ref{thm:lower_bound}. 
\begin{theorem}[Dau and Milenkovic \cite{Dau2017}]\label{thm:lower_bound}
    Let $\cA \subseteq \FF$ and $n = |\cA|$. Any linear repair scheme for Reed-Solomon codes $\rs(\cA,k)$ over the extension field $\FF = \GF(q^t)$ that uses the subfield
$\BB= \GF(q)$ requires a bandwidth of at least
$
\ell \lfloor b_{AVE}\rfloor + (n-1-\ell) \lceil b_{AVE} \rceil,
$
where $b_{AVE}$ and $\ell$ defined as,
$$
b_{AVE} \triangleq \log_q \left(\frac{(n-1)|\FF|}{(n-k-1)(|\FF|-1)+(n-1)}\right),
$$
and
$$
\ell \triangleq \begin{cases}
    n-1,&\text{if } b_{AVE}\in\mathbb{Z},\\
    \left\lfloor\frac{L - (n-1)q^{-\lfloor b_{AVE}\rfloor}}{q^{-\lfloor b_{AVE}\rfloor} - q^{-\lceil b_{AVE}\rceil}}\right\rfloor, &\text{otherwise}.
\end{cases}
$$
Here, $L \triangleq ((n-k-1)(|\FF|-1)+(n-1))/|\FF|$.
\end{theorem}

{\color{black} 
\subsection{Repair Scheme by Guruswami and Wootters.}
Guruswami and Wootters \cite{guruswamiwooters2017} proposed a repair scheme as follows. For each $i\in[t]$, we consider a parity-check polynomial
$$
r_i(x) = \tr(u_i(x-\alpha^*))/(x-\alpha^*).
$$
Here, $r_i(x)$ is a polynomial of degree $q^{t-1}-1$. So, if $k \le |\cA| - q^{t-1}$, the following parity check equations hold for all $i\in[t]$:
\begin{align*}
    \sum_{\alpha\in\cA} r_i(\alpha)\lambda_\alpha c(\alpha) = 0 \implies
     \tr(u_i\lambda_{\alpha^*}c(\alpha^*)) = -\sum_{\alpha\in\cA\setminus\{\alpha^*\}} \tr(u_i(\alpha-\alpha^*))\tr\left(\frac{\lambda_\alpha c(\alpha)}{\alpha-\alpha^*}\right).
\end{align*}
Notice that, to we can compute $\tr(u_i \lambda_{\alpha^*}c(\alpha^*))$ for all $i\in[t]$ by downloading $\tr(\lambda_\alpha c(\alpha)/(\alpha-\alpha^*))\in\BB$ from all available nodes. Then, we compute
$$
f(\alpha^*) = \lambda_{\alpha^*}^{-1}\sum_{i\in[t]} u_i^\perp \tr(u_i \lambda_{\alpha^*} c(\alpha^*)).
$$
This results in a bandwidth of $(|\cA| - 1)\log|\BB|$ bits.}

{\color{black}
\subsection{Repair Scheme by Lin.}
Lin \cite{Lin2023} proposed a repair scheme for an erased Reed-Solomon code symbol without involving all available nodes. Lin realized that, given $k$, there is a gap between $|\cA|-k-1$ and the degree of $r_i(x)$ in the Guruswami-Wootters scheme, which can be utilized. Given $k$, for each $i\in[t]$, Lin considered a subset $\cS\subset \cA$ with $|\cS| = |\cA| - k - q^{t-1}$ and a different parity-check polynomial
$$
r_{i,\cS}(x) = g_{\cS}(x)\tr(u_i(x-\alpha^*))/(x-\alpha^*),\quad\text{where}\quad g_{\cS}(x) = \prod_{\beta\in \cS }(x-\beta) 
$$
Such a parity-check polynomial has a degree exactly $|\cA|-k-1$ and it has at least $|\cS|$ roots. Then, if $k\le |\cA|-q^{t-1}$, the following parity check equations hold for all $i\in[t]$:
\begin{align*}
    \sum_{\alpha\in\cA\setminus\cS} r_{i,\cS}(\alpha)\lambda_{\alpha}c(\alpha) = 0 \implies \tr(u_i\lambda_{\alpha^*}c(\alpha^*)) = -\sum_{\alpha\in\cA\setminus(\cS\cup \{\alpha^*\})} \tr(u_i(\alpha-\alpha^*))\tr\left(\frac{g_{\cS}(\alpha)\lambda_\alpha c(\alpha)}{\alpha-\alpha^*}\right)
\end{align*}
Here, we can ignore the code symbols $(c(\alpha))_{\alpha\in\cS}$ since $r_{i,\cS}(\alpha) = 0$ for all $\alpha\in\cS$. Thus, we repair $c(\alpha^*)$ utilizing only $|\cA|-|\cS|-1$ nodes, resulting in a bandwidth of $(k+q^{t-1}-1)\log|\BB|$ bits. It can be shown that, for $|\cA|-q^{t-1}-q+1\le k \le |\cA|-q^{t-1}$, the repair scheme bandwidth by Lin \cite{Lin2023} coincides with the lower bound in \cite{Dau2017}. 
}

\subsection{Repair Scheme by Liu et al.}
Recently, Liu et al. \cite{Liu2024} also proposed a repair scheme for an erased Reed-Solomon code symbol without involving all available nodes. In the special case for the trace repair scheme,
there exists a set $\cI \subseteq \cA$, so that, by downloading $\tr(\lambda_\alpha c(\alpha)/(\alpha-\alpha^*))$ for all $\alpha\in\cA\setminus(\cI\cup\{\alpha^*\} )$, we can recover $\tr(\lambda_\alpha c(\alpha)/(\alpha-\alpha^*))$ for all $\alpha\in \cI$. It can be done by forming new parity-check polynomials to repair traces that we do not download. After we obtain all required traces, we apply the Guruswami-Wootters scheme to repair the erased node. This results in a bandwidth of $(|\cA|-|\cI|-1)\log|\BB|$ bits. In what follows, for simplicity and without loss of generality, we assume that $c(0)$ is erased and we restate this special case of Liu et al. \cite{Liu2024} in Theorem~\ref{thm:liu}.


\begin{theorem}[Liu et al. \cite{Liu2024}]\label{thm:liu}
    Let $n= |\cA|$ and fix $k$. Let
    $$
        {\cal Y} = \{(0,y_1,\ldots,y_{n-1}): y_i \in\frac{1}{\alpha_i}\BB\}
    $$
    and
    $$
    {\cal W}_k = \rs(\cA,k)^\perp\cap\cY.
    $$
    If $\dim({\cal W}_k) \ge d$, then we can repair $c(0)$ with bandwidth of $(n-d-1)\log|\BB|$ bits.
\end{theorem}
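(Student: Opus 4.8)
The plan is to treat the Guruswami--Wootters repair of $f(0)$ as a base scheme in which each surviving node contributes a single symbol of $\BB$, and to show that each independent direction in $\cW_k$ removes one such contribution. I would first recall the base scheme in the form relevant here: one selects $t$ dual codewords of $\rs(\cA,k)$ whose entries at the erased coordinate form a $\BB$-basis $\zeta_1,\dots,\zeta_t$ of $\FF$ and whose entry at each surviving position $\alpha_i$ is a $\BB$-multiple of $\lambda_{\alpha_i}/\alpha_i$ (with $\lambda_{\alpha_i}=1$ when $\cA=\FF$, the setting of the theorem as stated). Setting $D_i\triangleq\tr(\lambda_{\alpha_i}f(\alpha_i)/\alpha_i)\in\BB$ for the symbol requested from node $i$, the corresponding parity checks together with the $\BB$-linearity of $\tr$ express, for each $j$,
\[
\tr(\zeta_j f(0)) \;=\; -\sum_{i=1}^{n-1} c_{j,i}\,D_i ,\qquad c_{j,i}\in\BB ,
\]
so the $n-1$ symbols $D_1,\dots,D_{n-1}$ determine $f(0)$; this is the $(n-1)\log|\BB|$-bit scheme.

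The crux is the next step: turning membership in $\cW_k$ into $\BB$-linear dependencies among $D_1,\dots,D_{n-1}$. Take $\vy=(0,y_1,\dots,y_{n-1})\in\cW_k$ and write $y_i=\beta_i/\alpha_i$ with $\beta_i\in\BB$. Since $\vy$ is a dual codeword of $\rs(\cA,k)$ whose entry at the erased coordinate vanishes, the unknown $f(0)$ drops out of the orthogonality relation, leaving $\sum_{i=1}^{n-1}(\beta_i/\alpha_i)f(\alpha_i)=0$ in $\FF$; applying $\tr$ and pulling the $\BB$-scalars $\beta_i$ through the trace gives $\sum_{i=1}^{n-1}\beta_i D_i=0$. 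The assignment $\vy\mapsto(\beta_1,\dots,\beta_{n-1})$ is an injective $\BB$-linear map from $\cW_k$ into $\BB^{n-1}$, so a basis of any $d$-dimensional subspace of $\cW_k$ (available since $\dim_\BB\cW_k\ge d$) yields $d$ linearly independent relations; collecting them as the rows of a matrix $M\in\BB^{d\times(n-1)}$ of rank $d$, the download vector $(D_1,\dots,D_{n-1})^\top$ must lie in $\ker M$. This is the step I expect to demand the most care, since it is precisely the shape of $\cY$ --- the forced zero in the erased coordinate together with the slots $\tfrac1{\alpha_i}\BB$ matched to the scaling of the GW download --- that makes the relation homogeneous in the $D_i$ and free of $f(0)$; for a general evaluation set one simply carries the multipliers $\lambda_{\alpha_i}$ through the identical computation.

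The rest is elementary linear algebra. Because $M$ has rank $d$, I would choose an index set $\cI\subseteq[n-1]$ with $|\cI|=d$ for which the $d\times d$ column submatrix $M_\cI$ is invertible, and declare the nodes indexed by $\cI$ to be the $d$ omitted helpers. The repairer requests $D_i$ only for $i\in[n-1]\setminus\cI$, i.e.\ $n-1-d$ elements of $\BB$, for a bandwidth of $(n-d-1)\log|\BB|$ bits; it then recovers the missing symbols from $D_\cI=-M_\cI^{-1}M_{[n-1]\setminus\cI}\,D_{[n-1]\setminus\cI}$ and feeds the complete vector $(D_1,\dots,D_{n-1})$ into the $t$ base-scheme identities, which return $\tr(\zeta_j f(0))$ for a basis $\{\zeta_j\}$ and hence $f(0)$. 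Non-negativity of the bandwidth is automatic since $\cW_k\subseteq\cY$ forces $d\le\dim_\BB\cY=n-1$. The one point this argument deliberately leaves open --- exhibiting such an $\cI$ \emph{explicitly} rather than by a rank count, and identifying the largest admissible $d$, namely $\dim_\BB\cW_k$ --- is exactly what the subsequent Theorem~\ref{thm:dim_of_Fk} and Corollary~\ref{cor:nodes_to_download} supply.
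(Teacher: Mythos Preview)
Your proof is correct and follows essentially the same line as the paper's argument (present in the source inside a \texttt{comment} environment): both turn a basis of $\cW_k$ into $d$ independent $\BB$-linear relations among the downloaded traces $D_i=\tr(f(\alpha_i)/\alpha_i)$, invoke a rank argument to select $d$ omittable helpers, and finish with Guruswami--Wootters. The only difference is packaging --- the paper factors the matrix of dual words as $\boldsymbol{W}=\boldsymbol{Z}\boldsymbol{A}$ with $\boldsymbol{Z}\in\BB^{d\times n}$ and $\boldsymbol{A}=\mathrm{diag}(1,1/\alpha_1,\dots,1/\alpha_{n-1})$, whereas you work coordinate-wise via the injection $\vy\mapsto(\beta_1,\dots,\beta_{n-1})$.
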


In the same work, Liu et al. provided lower bounds for $\dim(\cW_k)$ in two settings: namely, $\FF=\GF(p^2)$ with $\BB=\GF(p)$; and $\FF=\GF(2^s)$ with $\BB=\GF(2^{s/2})$ for even $s\ge 2$. In this work, we study arbitrary finite fields and, crucially, determine the exact value of $\dim(\cW_k)$.

\section{Dimension Analysis}\label{sec:our_work}

Let us first formulate our problem. 
Let $\cA=\FF = \GF(q^t)$ for some prime power $q$ and $n=|\FF|$. 
Let $\omega$ be the primitive element of $\FF$.
Consider the codeword $(c(\alpha))_{\alpha\in\cA}\in\rs(\cA,k)$ where $c(0)$ is erased. 
Let $\cW_{k,\cS}$ be as defined in Theorem~\ref{thm:main_theorem}. 
Our goal is to determine $\dim(\cW_{k,\cS})$ exactly, and to identify the helper nodes to download from.

This section is organized as follows. In Section~\ref{sec:main_idea}, we present the core concept of our scheme by introducing \textit{Trace Repair Compatible Polynomial} in Definition~\ref{def:trace-repair-compatible}. Such polynomials must satisfy two conditions: the Base Field Condition and the Degree Condition. We establish that the repair bandwidth is directly determined by the dimension of the space formed by these polynomials. Next, in Section~\ref{sec:basis_of_F}, we analyze the structure of polynomials satisfying the Base Field Condition, that is, those mapping $\FF$ to $\BB$. We denote this space by $\cF\triangleq \{f\text{ polynomial } | f:\FF\to \BB\}$ and determine its $\BB$-basis. Finally, in Section~\ref{sec:basis_of_W}, we apply the Degree Condition to this basis to explicitly identify the subspace of Trace-Repair Compatible polynomials. Consequently, determining $\dim(\cW_{k,\cS})$ is reduced to finding the dimension of this subspace.

\subsection{Main Idea}\label{sec:main_idea}
In this section, we discuss how to fully utilize the previous work's idea, resulting in a further bandwidth reduction. To do so, we introduce a class of polynomials essential to our proposed scheme.

\begin{definition}[Trace-Repair Compatible Polynomial]\label{def:trace-repair-compatible}
    Fix $\cS\subset\FF$ and let $g(x) = \prod_{\beta\in\cS}(x-\beta)$ if $\cS$ is nonempty; $g(x) = 1$ if $\cS = \emptyset$. A polynomial $T(x)\in\FF[x]$ is called \textit{Trace Repair Compatible} polynomial with respect to $g(x)$ if it satisfies the following two conditions:
    \begin{enumerate}
        \item (Base Field Condition) $T:\FF\to\BB$, and
        \item (Degree Condition) The polynomial $h(x) = g(x) T(x)/x$ satisfies $h(0) = 0$ and $\deg(h(x))\le n-k-1$.
    \end{enumerate}
\end{definition}
\noindent There is a strong connection between Trace-Repair Compatible polynomials and the subspace $\cW_{k,\cS}$. 
Specifically, let $\cV_{k,\cS}$ be the vector space of all Trace-Repair Compatible polynomials with respect to $g(x)$, and let $\cT_{k,\cS}=\{T_1,\ldots,T_d\}$ be a basis for $\cV_{k,\cS}$. 
In other words, $\cV_{k,\cS} = \vspan_{\BB} \cT_{k,\cS}$.
The dimension of this space determines the repair bandwidth.
\begin{theorem}\label{thm:dim_V}
    Let $\cT_{k,\cS} = \{T_1,\ldots,T_d\}$ be a basis of $\cV_{k,\cS}$. Then, $\dim(\cW_{k,\cS}) = \dim(\cV_{k,\cS}) = d$.
\end{theorem}
\begin{proof}
    Let $\phi:\cV_{k,\cS} \to \cW_{k,\cS}$ be the evaluation map defined by $\phi(f) = (g(\alpha)f(\alpha)/\alpha)_{\alpha\in \FF}$. Clearly, $\phi$ is a linear map.

    \noindent \textbf{Injectivity}: Suppose that $\phi(f) = \boldsymbol{0}$. Then, $h(\alpha) = g(\alpha)f(\alpha)/\alpha = 0$ for all $\alpha\in\FF$. Since $\deg(h(x))\le n-k-1<n$ and $h$ has $n$ roots, $h(x)$ must be a zero polynomial. Therefore, $f = 0$.
    
    \noindent \textbf{Surjectivity}: By definition, any $\boldsymbol{w}\in\cW_{k,\cS}$ corresponds to some polynomial $f$ satisfying the Trace Repair Compatible conditions. Thus $f\in\cV_{k,\cS}$.

    \noindent Therefore, $\phi$ is a linear isomorphism, and $\dim(\cW_{k,\cS}) = \dim(\cV_{k,\cS}) = d$.
\end{proof}

Theorem~\ref{thm:dim_V} establishes that the polynomial space $\cV_{k,\cS}$ provides a concrete representation of the subspace $\cW_{k,\cS}$. 
Specifically, finding a basis for $\cW_{k,\cS}$ is equivalent to finding a basis for $\cV_{k,\cS}$.
We now utilize this correspondence to prove the existence of a repair scheme with the bandwidth guaranteed in Theorem~\ref{thm:main_theorem}.

\begin{proof}[Proof of Theorem~\ref{thm:main_theorem}]
    Recall that, to repair $c(0)$, we are required to have $\tr(g(\alpha)c(\alpha)/\alpha)$ for all $\alpha\in\FF^*$. Let $\cS\subset \FF^*$ be given and set $g(x) = \prod_{\beta\in\cS}(x-\beta)$ if $\cS$ is nonempty, and $g(x) = 1$ if $\cS = \emptyset$. Suppose that $\dim(\cW_{k,\cS}) = \dim(\cV_{k,\cS}) = d$ (by Theorem~\ref{thm:dim_V}) and let $\cT_{k,\cS}$ be a basis for $\cV_{k,\cS}$. For each $i\in[d]$, let $h_i(x) = g(x)T_i(x)/x$ and the following parity-check equations hold:
    $$
    \sum_{\alpha\in\FF^*} h_i(\alpha)c(\alpha) = 0 \implies \sum_{\alpha \in \FF^*} T_i(\alpha) \tr(g(\alpha)c(\alpha)/\alpha) = 0
    $$
    Since $g(\alpha) = 0$ for all $\alpha \in \cS$, the terms in the summation corresponding to $\alpha\in\cS$ vanish. Hence, we have
    $$
    \sum_{\alpha \in \FF^*\setminus\cS} T_i(\alpha) \tr(g(\alpha)c(\alpha)/\alpha) = 0.
    $$
    In other words, we can exclude these $|\cS|$ nodes in the repair scheme. Furthermore, in matrix form we have, $\boldsymbol{Mx} = \boldsymbol{0}$
    where $\boldsymbol{M}$ is a $d\times(n-1-|\cS|)$ matrix with entries $M_{j,\alpha} = T_j(\alpha)$ and $\boldsymbol{x}$ is the vector of traces. Since $\boldsymbol{M}$ has a row rank $d$, there exists $\cI\subseteq \FF^*\setminus\cS$ of size $d$, such that $\boldsymbol{M}_{\cI}$ is invertible. Hence, we can recover traces corresponding to $\cI$ using traces corresponding to $\FF^*\setminus(\cS\cup \cI)$. Therefore, to repair $c(0)$, we are only required to download a trace in $\BB$ from $|\FF^*|-|\cS| - |\cI| = n - 1 - (|\cS| + \dim(\cW_{k,\cS}))$ nodes.
\end{proof}

\begin{remark}
    The proof above relies on the matrix rank property to guarantee the existence of $\cI$ of size $d = \dim(\cV_{k,\cS})$. To establish an explicit scheme, we are left to determine the exact value of $d$ (in Theorem~\ref{thm:dim_of_WkS}) and provide explicit helper nodes to perform the repair (in Corollary~\ref{cor:nodes_to_download}).
\end{remark}

\subsection{Base Field Condition: The Polynomials $f:\FF\to\BB$.}\label{sec:basis_of_F}

In this section, we construct an explicit $\BB$-basis of $\cF = \{f\text{ polynomial }| f:\FF\to\BB\}$ by analyzing the effect on the condition to its coefficients and exponents. We begin by recalling the following terminology, the \textit{cyclotomic cosets}.
\begin{definition}
    Fix $t$ and $q = p^m$. A subset $\{a_1,\ldots,a_s\} \subset \{0,1,\ldots,q^t-2\}$ is called a \textit{cyclotomic coset} if $qa_j = a_{j+1}(\text{mod } q^t-1)$ for all $j\in\{1,\ldots, s-1\}$ and $qa_s = a_1(\text{mod } q^t-1)$. 
    The collection of all such cosets partitions $\{0,1,\ldots,q^t-2\}$ and we refer to it as the \textit{collection of cyclotomic cosets modulo $q^t-1$}.
\end{definition}
In this work, we use
$
C_i =\{a^{(i)},a^{(i)}q,\ldots,a^{(i)}q^{s_i-1}\}
$
to denote the $i$-th cyclotomic coset in its collection $\Xi$ and $s_i = |C_i|$. It is clear that $s_i\le t$ as $a^{(i)}q^t = a^{(i)}(\mathrm{mod}\,{q^t-1})$. For example, suppose $q = 3$ and $t=2$. Then, the collection of cyclotomic cosets of $\{0,1,\ldots, 7\}$ is $\{\{0\},\{1,3\},\{2,6\},\{4\},\{5,7\}\}$. The size of each coset is at most $2$.

There is a direct correspondence between the cyclotomic cosets and polynomials in $\cF$. A polynomial $f(x)\in\FF[x]$ maps $\FF$ to $\BB$ if and only if $[f(x)]^q = f(x)$ for all $x\in\FF$. This condition forces the coefficients and expoonents within the same cyclotomic coset to be related. We formalize this finding in Lemma~\ref{lem:characterizing_f}.

\begin{lemma}\label{lem:characterizing_f}
Let
$
f(x) = \sum_{i=0}^{q^t-2} f_ix^i
$. Then, $f\in\cF$ if and only if
$f(x) = \sum_{i=1}^{|\Xi|}\sum_{j=0}^{s_i-1}f_{a^{(i)}}^{q^j}x^{a^{(i)}q^j}$ with $f_0\in\BB$.
\end{lemma} 
\begin{proof}

Note that, $f\in\cF$ if and only if $[f(x)]^q = f(x)$ for all $x\in\FF$. That is,
$$
\sum_{i=0}^{q^t-2} f_i^q x^{iq} = \sum_{i_*=0}^{q^t-2} f_{i_*}x^{i_*}\implies f_{iq} = f_i^q,
$$
for all $i\in[0,q^t-2]$. From the above, we must have $f_0\in\BB$. Then, note that $\{iq : i\in[0,q^t-2]\} = [0,q^t-2]$ can be partitioned into $C_1,\ldots, C_{|\Xi|}$. Therefore, splitting the summation according to $C_i$ yields,
$$
f(x) = \sum_{i=1}^{|\Xi|}\sum_{j=0}^{s_i-1} f_{a^{(i)}q^j}x^{a^{(i)}q^j} = \sum_{i=1}^{|\Xi|}\sum_{j=0}^{s_i-1} f_{a^{(i)}}^{q^j}x^{a^{(i)}q^j}.
$$
\end{proof}
\begin{example}
    Suppose $\FF = \GF(2^3),\BB = \GF(2)$. Then, we have $\Xi =\{\{0\},\{1,2,4\},\{3,6,5\}\}$. Let $f(x) = \sum_{i=0}^6 f_i x^i$. Then, comparing the coefficients of $f(x)$ and
    $$
    [f(x)]^2 = f_0^2 + f_1^2 x^2 + f_2^2 x^4 + f_3^2 x^6 + f_4^2 x + f_5^2 x^3 + f_6 x^5, 
    $$
    yields $f_0\in\BB$, $f_2 = f_1^2, f_4 = f_1^4, f_6 = f_3^2, $ and $f_5 = f_3^4$. Hence,
    $$
    f(x) = f_0 + (f_1 x + f_1^2 x^2 + f_1^4 x^4) + (f_3 x^3 + f_3^2 x^6 + f_3^4 x^5),
    $$
    with $f_0\in\BB$, which expression is partitioned according to the cyclotomic cosets in $\Xi$.\qed
\end{example}

Lemma~\ref{lem:characterizing_f} shows a crucial structural property, that is, the space $\cF = \{f:f:\FF\to\BB\}$ decomposes into disjoint subspaces corresponding to each cyclotomic coset. In what follows, we let $\cF \triangleq \bigoplus_{i} \cF_i$, where $\cF_i$ is the subspace of $\cF$ with polynomials corresponding to $C_i$. Furthermore, for each cyclotomic coset $C_i$, the corresponding polynomial exponents are the entries in $C_i$ and its coefficients are fully determined by the leading coefficient $f_{a^{(i)}}$. Since the leading coefficient $f_{a^{(i)}}$ lies in the extended field $\FF$, it can be expressed as a $\BB$-linear combination of $\BB$-basis elements of $\FF$. Let this $\BB$-basis be $\{1,\omega,\ldots,\omega^{t-1}\}$. With this expression, we can construct an explicit $\BB$-basis for polynomials corresponding to $C_i$.

To be more precise, fix $C_i$ and let $f_{a^{(i)}} = \sum_{\ell=0}^{t-1} f_{a^{(i)}}^{(\ell)} \omega^\ell$ for some $f_{a^{(i)}}^{(0)},\ldots,f_{a^{(i)}}^{(t-1)}\in\BB$. Then, the corresponding polynomial to this cyclotomic coset $C_i$ can be expressed as
$$
\sum_{j=0}^{s_i - 1} f_{a^{(i)}}^{q^j} x^{a^{(i)}q^j}  
= \sum_{j=0}^{s_i-1} \left(\sum_{\ell = 0}^{t-1} f_{a^{(i)}}^{(\ell)}\omega^\ell\right)^{q^j} x^{a^{(i)}q^j}
= \sum_{j=0}^{s_i-1} \sum_{\ell = 0}^{t-1} f_{a^{(i)}}^{(\ell)}\omega^{\ell{q^j}} x^{a^{(i)}q^j}
= \sum_{\ell=0}^{t-1}f_{a^{(i)}}^{(\ell)}\left(\sum_{j=0}^{s_i-1} \omega^{\ell q^j}x^{a^{(i)}q^j}\right).
$$
This shows that we can express the polynomial as $t$ $\BB$-linear combination of polynomials $T_0^{(i)},\ldots,T_{t-1}^{(i)}$ where
$$
T_\ell^{(i)}(x) \triangleq \sum_{j=0}^{s_i-1} \omega^{\ell q^j}x^{a^{(i)}q^j},\quad\text{for }\ell\in[0,t-1].
$$

However, we observe a dimensional reduction, that is, we need not consider all $t$ of them but only $s_i$ of them, and $s_i \le t$. Let us demonstrate this observation using Example~\ref{ex:dim_reduction} and formalize it in Theorem~\ref{lem:basis}.

\begin{example}\label{ex:dim_reduction}
    Let $\FF = \GF(2^4)$ and $\BB = \GF(2)$, and we consider a polynomial $f:\FF\to\BB$ with exponents in the cyclotomic coset $\{5,10\}$. Here, $s_i = 2$ and $t= 4$. From Lemma~\ref{lem:characterizing_f}, we know that it takes the form $f(x) = f_5 x^5 + f_5^2 x^{10}$. Since $f_5\in\FF$, we can write $f_5 = f_5^{(0)} + f_5^{(1)} \omega + f_5^{(2)} \omega^2 + f_5^{(3)} \omega^3$ with each $f_5^{(\ell)}\in\BB$. Rewriting $f(x)$ with this representation, we have
$f(x) = f_5^{(0)}(x^5 + x^{10}) + f_5^{(1)}(\omega x^5 + \omega^2 x^5) + f_5^{(2)}(\omega^2 x^5 + \omega^4 x^5) + f_5^{(3)}(\omega^3 x^5 + \omega^6 x^5)$. In other words, $f(x)\in \text{span}_\BB \{x^5+x^{10}, \omega x^5 + \omega^2 x^{10}, \omega^2 x^5 + \omega^4 x^5, \omega^3 x^5 + \omega^6 x^5\}$. But, $\omega^2 x^5 + \omega^4 x^{10} = \lambda_1 (x^5+x^{10}) + \lambda_2 (\omega x^5 + \omega^2 x^{10})$ for some $\lambda_1,\lambda_2\in\BB$, that is
$$
\begin{bmatrix}
    \lambda_1\\
    \lambda_2
\end{bmatrix} = \begin{bmatrix}
    1 & \omega\\
    1 & \omega^2
\end{bmatrix}^{-1} \begin{bmatrix}
    \omega^2\\
    \omega^4
\end{bmatrix}.
$$
This also occurs for $\omega^3 x^5 +\omega^6 x^5$. Therefore, $f(x)\in\text{span}_{\BB}\{x^5 + x^{10}, \omega x^5 + \omega^2 x^{10}\}$.
\end{example}

\begin{lemma}\label{lem:basis}
    Fix a cyclotomic coset $C_i$ of size $s_i$. Let $\cF_i$ be the subspace of $\cF$ with polynomials corresponding to $C_i$. Then, $\{T_{\ell}^{(i)}\}_{\ell\in[0,s_i-1]}$ is a $\BB$-basis of $\cF_i$.
\end{lemma}
\begin{proof}
    First, we verify that $T_\ell^{(i)}\in\cF_i$ for all $\ell\in[0,s_i-1]$. This is clear since
    $$[T_{\ell}^{(i)}(x)]^q = \omega^{\ell q^{s_i}} x^{a^{(i)}q^{s_i}}+\sum_{j=1}^{s_i-1}\omega^{\ell q^j}x^{a^{(i)}q^j} = T_\ell^{(i)}(x).$$
    Next, we show that $\{T_\ell^{(i)}\}_{\ell\in[0,s_i-1]}$ is $\BB$-linearly independent and $\cF_i \subseteq \vspan_{\BB}\{T_\ell^{(i)}\}_{\ell\in[0,s_i-1]} $. To show linear independence, we suppose that there is $\lambda_0,\ldots,\lambda_{s_i-1}$ such that $\sum_{\ell=0}^{s_i-1}\lambda_\ell T_\ell^{(i)}(x) =0$. We show that $\lambda_0=\cdots=\lambda_{s_i-1}=0$. Indeed, by comparing coefficients and expressing in matrix form, we have $\boldsymbol{W}\boldsymbol{\lambda} = \boldsymbol{0}$ with $\boldsymbol{\lambda} = \begin{bmatrix}
        \lambda_0& \cdots & \lambda_{s_i-1}
    \end{bmatrix}^\top$ and
    $$
    \boldsymbol{W} = \begin{bmatrix}
    1 & \omega  & \cdots & \omega^{s_i-1}\\
    1 & \omega^q  & \cdots & (\omega^{q})^{s_i-1}\\
    \vdots & \vdots  &\ddots & \vdots\\
    1 & \omega^{q^{s_i-1}} & \cdots & (\omega^{q^{s_i-1}})^{s_i-1} 
\end{bmatrix}.
    $$
    Since $\boldsymbol{W}$ is a Vandermonde matrix, it is invertible and $\lambda_0 = \cdots = \lambda_{s_i-1}=0$. Next, we show that for any $f\in\cF_{i}$, $f\in\vspan_{\BB}\{T_{\ell}^{(i)}\}_{\ell\in[0,s_i-1]}$. It is sufficient to show that, for any $\ell^*\in[s_i,t-1]$, $f_{\ell^*}\in\vspan_{\BB}\{T_{\ell}^{(i)}\}_{\ell\in[0,s_i-1]}$, that is,
    $$
T_{\ell^*}^{(i)}(x) = \sum_{\ell=0}^{s_i-1} \lambda_\ell T_\ell^{(i)}(x)
$$
for some $\lambda_0,\ldots,\lambda_{s_i-1}\in\BB$. Indeed, in matrix form, $\boldsymbol{W\lambda} = [\omega^{\ell^*},\omega^{\ell^{*}q},\ldots,\omega^{\ell^* q^{s_i-1}}]^\top$. Hence, $\boldsymbol{\lambda}=\boldsymbol{W}^{-1}[\omega^{\ell^*},\omega^{\ell^{*}q},\ldots,\omega^{\ell^* q^{s_i-1}}]^\top$. Furthermore, note that $T_\ell^{(i)}(\alpha)\in\BB$ for all $\alpha\in\FF$, i.e., $[T_\ell^{(i)}(x)]^q = T_\ell^{(i)}(x)$. Therefore,
\begin{align*}
    T_{\ell^*}^{(i)}(x) = \sum_{\ell=0}^{s_i-1} \lambda_\ell T_\ell^{(i)}(x) \iff T_{\ell^*}^{(i)}(x) = \sum_{\ell=0}^{s_i-1} \lambda_\ell^q T_\ell^{(i)}(x).
\end{align*}
This implies $\lambda_\ell^q = \lambda_\ell\in\BB$ for all $\ell\in[0,s_i-1]$.
\end{proof}

Lemma~\ref{lem:basis} provides a basis for the subspace $\cF_i$ whose polynomials map from $\FF$ to $\BB$ and has exponents corresponding to the cyclotomic coset $C_i$. To construct a basis for the subspace $\cF$, we exploit the partition property of cyclotomic cosets. Specifically, since the collection $\Xi$ consists of pairwise disjoint cyclotomic cosets, any two polynomials corresponding to different cosets share no common exponents. Consequently, basis of $\cF_i$ and $\cF_j$ are $\BB$-linearly independent for any $i\ne j$, and $\cF = \bigoplus_i \cF_i$. Hence, we can use the union of disjoint bases of $\cF_i$ as a basis of $\cF$. We formalize this in Theorem~\ref{thm:dim_of_F}.

\begin{theorem}\label{thm:dim_of_F}
    Let $\cT\triangleq \left\{T_{\ell}^{(i)}:\ell\in[0,s_i-1], i\in[|\Xi|]\right\}$. Then, $\cT$ is a $\BB$-basis of $\cF$ and $\dim(\cF) = \sum_{i=0}^{|\Xi|} s_i$.
\end{theorem}
\begin{proof}
     Let $f$ be a corresponding polynomial of $\cF$. Due to Lemma~\ref{lem:characterizing_f},
    $$
    f(x) = \sum_{i=1}^{|\Xi|}\sum_{j=0}^{s_i-1}f_{a^{(i)}}^{q^j}x^{a^{(i)}q^j}.
    $$
Then, due to Lemma~\ref{lem:basis}, we can write
$$
f(x) = \sum_{i=1}^{|\Xi|}T^{(i)}(x) = \sum_{i=1}^{|\Xi|}\sum_{\ell=0}^{s_i-1} \lambda_\ell^{(i)} T_{\ell}^{(i)}(x).
$$
Again, due to Lemma~\ref{lem:basis} and since all distinct cyclotomic cosets in $\Xi$ are disjoint, $\{T_\ell^{(i)}:\ell\in[0,s_i-1],i\in[|\Xi|]\}$ is $\BB$-linearly independent. Then, the result follows.
\end{proof}
Theorem~\ref{thm:dim_of_F} fully characterizes the subspace $\cF$ by providing an explicit basis of it. In other words, it provides all possible candidate of $\BB$-linearly independent polynomial satisfying the Base Field Condition. Consequently, we are left to determine which of these candidates satisfy the Degree Condition (see Definition~\ref{def:trace-repair-compatible}). As a result, we obtain the maximum number of possible Trace Repair Compatible polynomials. To deliver the idea better, we first consider the case when $g(x) = 1$. With this choice of $g(x)$, $\cW_{k,\cS}$ is reduced to $\cW_k$ in Theorem~\ref{thm:liu}. In the next section, we show that we can find the dimension of $\cW_k$ exactly utilizing the $\BB$-basis of $\cF$ in Theorem~\ref{thm:dim_of_F}.

\subsection{The Exact Dimension of $\cW_k$ (Case $g(x) = 1$).}\label{sec:basis_of_W}
In this section, we find the exact dimension of $\cW_k$ which corresponds to the repair scheme by Liu et al. (see Theorem~\ref{thm:liu}). By Theorem~\ref{thm:dim_V}, this is equivalent to finding the dimension of the space of Trace Repair Compatible polynomials $\cV_k$ with $g(x) = 1$. The high-level idea is as follows. By having a $\BB$-basis of $\cF$ (see Theorem~\ref{thm:dim_of_F}), we have all possible $\BB$-linearly independent candidate polynomials satisfying the Base Field Condition in Definition~\ref{def:trace-repair-compatible}. Then, from the basis, we pick those satisfying the Degree Condition in Definition~\ref{def:trace-repair-compatible}. These chosen polynomials form a basis of $\cV_k$, and by Theorem~\ref{thm:dim_V}$, \dim(\cW_k) = \dim(\cV_k)$.

By Theorem~\ref{thm:dim_of_F}, any Trace Repair Compatible polynomial must lie in the span of $\cT = \{T_{\ell}^{(i)}:\ell\in\{0,s_i-1\},i\in[|\Xi|]\}$. However, for the polynomial to be Trace Repair Compatible, a basis element $T_{\ell}^{(i)}$ must also satisfy the Degree Condition. That is, for each $\ell\in\{0,s_i-1\}$ and $i\in[|\Xi|]$, let $h_{\ell}^{(i)} = T_{\ell}^{(i)}(x)/x$ and $h$ must satisfy (a) $h_\ell^{(i)}(0) = 0$, and (b) $\deg(h_\ell^{(i)})\le n-k-1$. Let us analyze them separately.
\begin{enumerate}[label = (\alph*)]
    \item Since $h_\ell^{(i)}(0) = 0$, then $T_{\ell}^{(i)}(x)$ cannot have the term $x$. In other words, we exclude those $T_{\ell}^{(i)}(x)$ that correspond to the coset containing $1$.
    \item For $k\ge 2$, we must have $\deg(T_\ell^{(i)}(x)) \le n-k$ with $0$ constant. In other words, we exclude those $T_\ell^{(i)}(x)$ that correspond to the cyclotomic coset containing $0$ and all cyclotomic cosets with some entry more than $n-k$.
\end{enumerate}
Based on this, we define the set of valid cosets 
$$\Xi_{k}^* \triangleq\begin{cases}
    \{C\in \Xi: 0\notin C, 1\notin C, \max(C)\le n-k\},&\text{for }k\ge 2,\\
    \{C\in \Xi: 1\notin C\},&\text{for }k= 1.
\end{cases} $$
We use $C_{i,k}^*$ to denote the $i$-th cyclotomic coset in $\Xi_k^*$ and let $s_{i,k}^* = |C_{i,k}^*|$. Hence, the basis of $\cV_k$ is simply the union of the basis polynomials corresponding to the cyclotomic cosets in $\Xi_k^*$. That is, 
$$
\cT_k^{*} \triangleq \left\{T_{\ell}^{(i)}:\ell\in[0,s_{i,k}^*-1], i\in[|\Xi_k^*|]\right\}
$$
\begin{example}
    Suppose $\FF = \GF(3^2),\BB = \GF(3),k=3$. We have $\Xi =\{\{0\},\{1,3\},\{2,6\},\{4\},\{5,7\}\}$ and $\Xi^*_k = \{\{2,6\},\{4\}\}$. Then, $\cT_k = \{T_0^{(1)}(x), T_1^{(1)}(x), T_0^{(2)}(x)\}$, where
    \begin{align*}
        T_0^{(1)}(x) &= x^2 + x^6,\\
        T_1^{(1)}(x) &= \omega x^2 + \omega^3 x^6,\\
        T_0^{(2)}(x) &= x^4.
    \end{align*}
\end{example}
\noindent We conclude this section with Theorem~\ref{thm:dim_of_Wk}.

\begin{theorem}\label{thm:dim_of_Wk}
    The dimension of the subspace $\cW_k$ is exactly the sum of the size of each valid cyclotomic coset:
    $$
    \dim(\cW_k) = \dim(\cV_k) = \sum_{i=1}^{|\Xi_k^*|} s_{i,k}^*.
    $$
\end{theorem}
\subsection{Further Improvement: Choice of $g(x)$ and Dimension of $\cW_{k,\cS}$.}\label{sec:opt}

In this section, we further lower the repair bandwidth by utilizing the idea proposed by Lin \cite{Lin2023}. Recall that, Lin introduced a polynomial $g(x) = \prod_{\beta\in\cS} (x-\beta)$ into the parity-check polynomial $r_{i,\cS}$. Specifically, Lin proposed $r_{i,\cS} = g(x) \tr(u_i x)/x$ which vanishes when $x=\beta\in\cS$. By using $r_{i,\cS}$, the term corresponding to $\beta\in\cS$ in the parity-check equations vanishes. This allows us to omit code symbols at $\cS$ entirely in the repair process. Lin fully utilized the degree gap between $\tr(u_i x)/x$ and $n-k-1$, that is, by setting $|\cS| = (n-k-1) - \deg(\tr(u_i x)/x) = n - k - q^{t-1}$. However, by doing so, it gives too big of restrictions on the subspace $\cW_{k,\cS}$ which leads to no room of improvement via the subspace method. Therefore, in this work, we relax the condition on $|\cS|$.

Furthermore, we observe a trade-off between $\cS$ and $\dim(\cW_{k,\cS})$. To see this, we discuss how to obtain $\dim(\cW_{k,\cS})$, which is very similar to the way we get $\dim(\cW_k)$. We check if a basis element $T_\ell^{(i)}\in\cT$ satisfies the Degree Condition in Definition~\ref{def:trace-repair-compatible}. That is, for each $\ell\in\{0,s_i-1\}$ and $i\in[|\Xi|]$, let $h_\ell^{(i)}(x) = g(x)T_\ell^{(i)}(x)/x$ and $h_\ell^{(i)}$ must satisfy (a) $h_\ell^{(i)}(0) = 0$, and (b) $\deg(h_\ell^{(i)}(x))\le n-k-1$. We analyze these separately.
\begin{enumerate}[label = (\alph*)]
    \item Note that, $g(0)$ is nonzero. So to ensure $h_\ell^{(i)}(0) = 0$, $T_\ell^{(i)}(x)$ cannot have the term $x$. In other words, we exclude those $T_\ell^{(i)}$ that correspond to the cyclotomic coset containing $1$.
    \item For $k\ge 2$, we must have $\deg(T_\ell^{(i)}(x))\le n-k-|\cS|$ with $0$ constant. In other words, we exclude those $T_\ell^{(i)}(x)$ that correspond to the cyclotomic coset containing $0$ and all cyclotomic cosets with some entry more than $n-k$.
\end{enumerate}
For a general $|\cS|$, we define the set of valid cosets
$$
\Xi_{k,|\cS|}^* \triangleq \begin{cases}
    \{C\in\Xi:0\ne C,1\ne C,\max(C)\le n-k-|\cS|\}, &\text{for }k\ge 2,\\
    \{C\in\Xi:1\ne C\}, &\text{for }k = 1.
\end{cases}
$$
We use $C_{i,k,|\cS|}^*$ to denote the $i$-th cyclotomic coset in $\Xi_{k,|\cS|}^*$ and let $s_{i,k,|\cS|}^* = |C_{i,k,|\cS|}^*|$. Hence, the basis of $\cV_{k,\cS}$ is simply the union of the basis polynomials corresponding to the cyclotomic cosets in $\Xi_{k,|\cS|}^*$. That is, 
$$
\cT_{k,\cS}^{*} \triangleq \left\{T_{\ell}^{(i)}:\ell\in[0,s_{i,k,|\cS|}^*-1], i\in[|\Xi_{k,|\cS|}^*|]\right\}.
$$
Thus, we have a similar result as in Theorem~\ref{thm:dim_of_Wk}, but for nonempty $\cS$. We summarize in Theorem~\ref{thm:dim_of_WkS}.
\begin{theorem}\label{thm:dim_of_WkS}
    The dimension of the subspace $\cW_{k,\cS}$ is exactly the sum of the size of each valid cyclotomic coset:
    $$
    \dim(\cW_{k,\cS}) = \dim(\cV_{k,\cS}) = \sum_{i=1}^{|\Xi_{k,|\cS|}^*|} s_{i,k,|\cS|}^*.
    $$
\end{theorem}
For nonempty $\cS$, we can see that the larger the set $\cS$, the stronger the restriction for a cyclotomic coset to be in $\Xi_{k,|\cS|}^*$ which leads to a smaller dimension. As per mentioned in Theorem~\ref{thm:main_theorem}, we can omit $|\cS| + \dim(\cW_{k,\cS})$ number of nodes and we want to maximize this. That is, given $\cS$, let $R_\cS$ be the number of nodes to omit, i.e., $$R_{\cS} \triangleq|\cS| + \dim({\cW_{k,\cS}}),$$
and we want to find $\max_{|\cS|} R_\cS$. We can find $\max_{|\cS|} R_\cS$ by iterating each $|\cS|$. However, we observe that $\dim(\cW_{k,\cS})$ is a non-increasing step function of $|\cS|$. That is, $\dim{(\cW_{k,\cS})}$ decreases only when the constraint $n-k-|\cS|$ drops below the maximum value of all cyclotomic cosets, forcing that coset to be omitted from the basis construction. Hence, instead of iterating all $|\cS|$, we iterate through all valid cyclotomic cosets by pruning. 

In the first iteration, we consider the largest possible set of cyclotomic cosets $\Xi^*_k$ and iteratively remove the coset with the largest value among all cosets. Throughout these iterations, we gradually decrease $\dim(\cW_{k,\cS})$ and increase $|\cS|$, and check which combinations gives us the largest $R_\cS$.

\noindent \textbf{Algorithm O. (Optimization of the Exclusion Set for $k\ge 2$)}
\begin{enumerate}[label=\textbf{O\arabic*.}, leftmargin=*, labelsep=0.5em]
    \item \textbf{(Initialization)} Set $\cU\leftarrow\Xi_k^* = \{C\in\Xi:0\ne C, 1 \ne C, \max(C)\le n-k\}$. Initialize a list to store tuples of $(\cU, |\cS|, R_{\cS})$.
    \item \textbf{(Prune Cosets)} While $\cU$ is nonempty:
    \begin{enumerate}[label = \arabic*.]
        \item Identify the coset $C_{\text{max}}\in\cU$ that contains $\beta = \max_{C\in\cU}\max(C)$.
        \item Update $|\cS| = n-k-\beta$. With this choice of $|\cS|$, the set $\cU$ is precisely $\Xi_{k,|\cS|}^*$ since for any $C\in\cU$, $\max(C)\le \beta = n - k - |\cS|$. 
        \item Compute $R_{\cS} = |\cS| + \sum_{C\in\cU}|C|$.
        \item Record the configuration $(\cU,|\cS|, R_{\cS})$.
        \item Prune $\cU \leftarrow \cU \setminus \{C_\text{max}\}$.
    \end{enumerate}
    \item \textbf{(Boundary Case)} Record $(\cU = \emptyset, |\cS| = n-k-q^{t-1},R_\cS = n-k-q^{t-1})$.
    \item \textbf{(Selection)} Choose the configuration such that $R_\cS$ is maximized.
\end{enumerate}

\begin{remark}
    For $k = 1$, the algorithm proceeds almost identically as the above with a single difference on the first pruning iteration. This is because, we include $\{0\}$ into $\cU \leftarrow \Xi_1^*$. The basis polynomial corresponding to this is $T(x) = 1$ and hence $h(x) = 1/x = x^{n-2}$ satisfies the Degree Condition in Definition~\ref{def:trace-repair-compatible}. Since $\deg(h(x)) = n-2$ and the degree bound is $n-2$, we cannot further multiply $T(x)/x$ with a $g(x)$. This forces $|\cS| = 0$. Consequently, the coset $\{0\}$ imposes the strictest constraint to the space $\cV_{k,\cS}$. Hence, in the first pruning iteration, we prune $\{0\}$ from $\cU$.
\end{remark}

\begin{figure*}[ht!]
    \centering

    \begin{subfigure}[b]{0.32\textwidth}
        \centering
        \includegraphics[width=\linewidth]{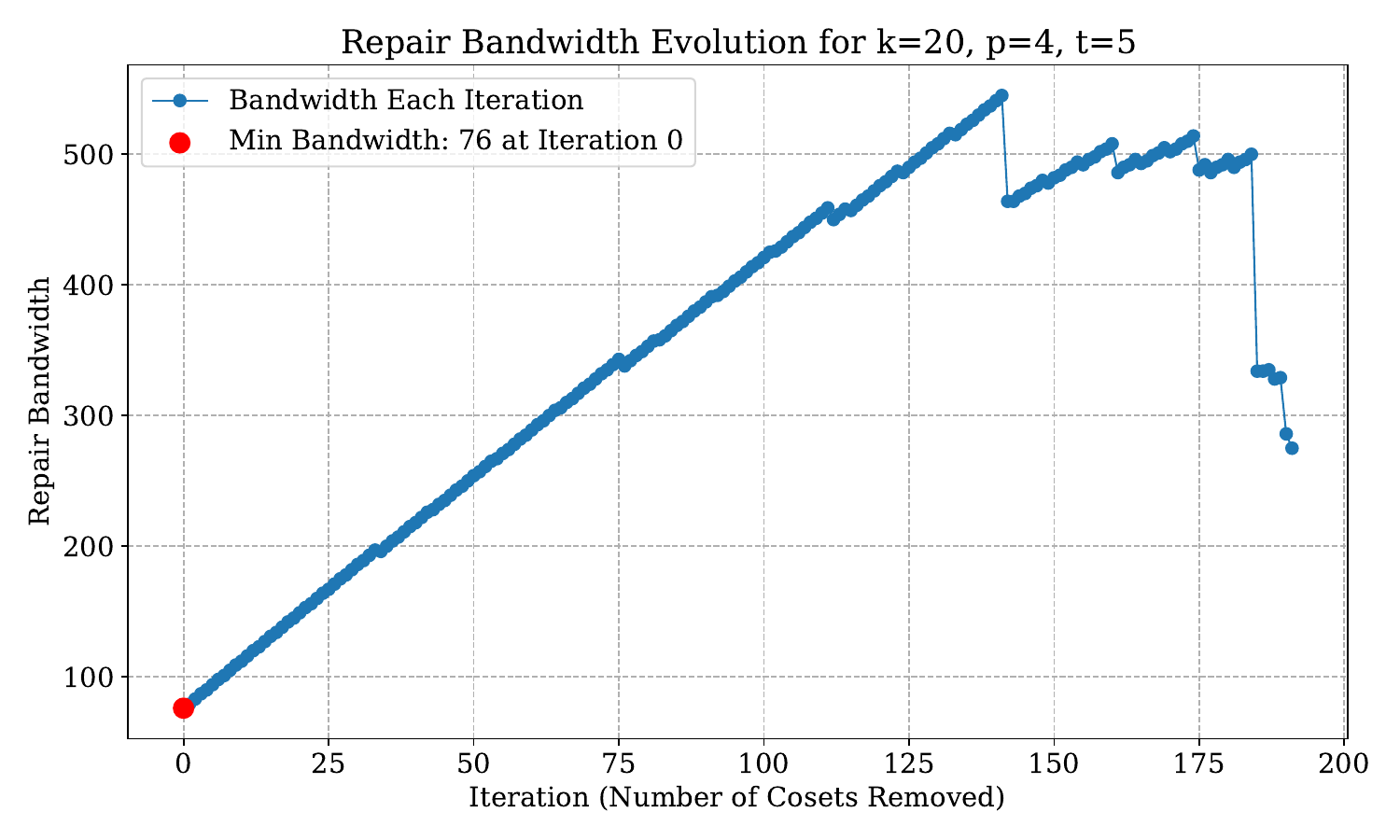}
        \caption{$k=20$.}
        \label{fig:small_k_20}
    \end{subfigure}
    \hfill
    \begin{subfigure}[b]{0.32\textwidth}
        \centering
        \includegraphics[width=\linewidth]{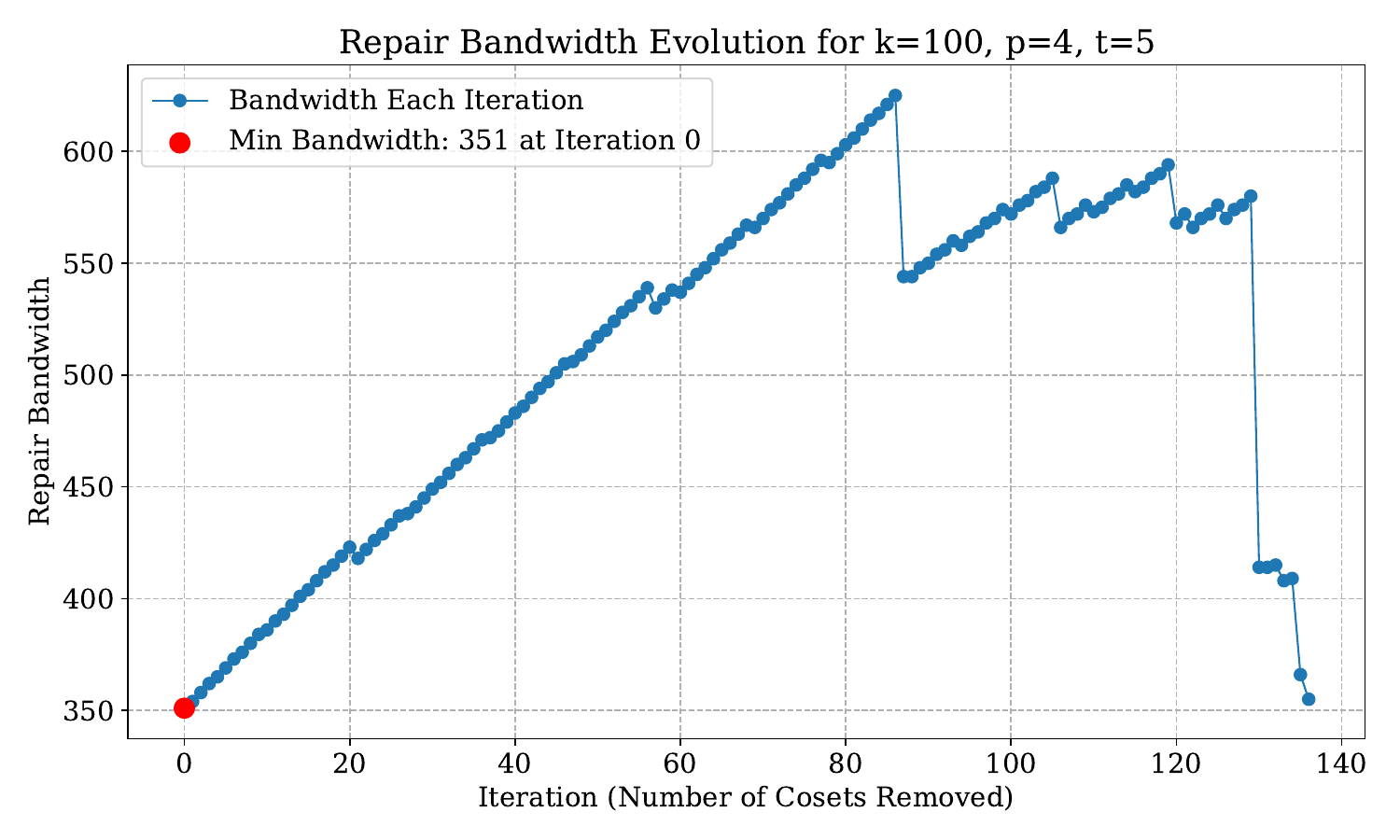}
        \caption{$k=100$.}
        \label{fig:big_k_100}
    \end{subfigure}
    \hfill
    \begin{subfigure}[b]{0.32\textwidth}
        \centering
        \includegraphics[width=\linewidth]{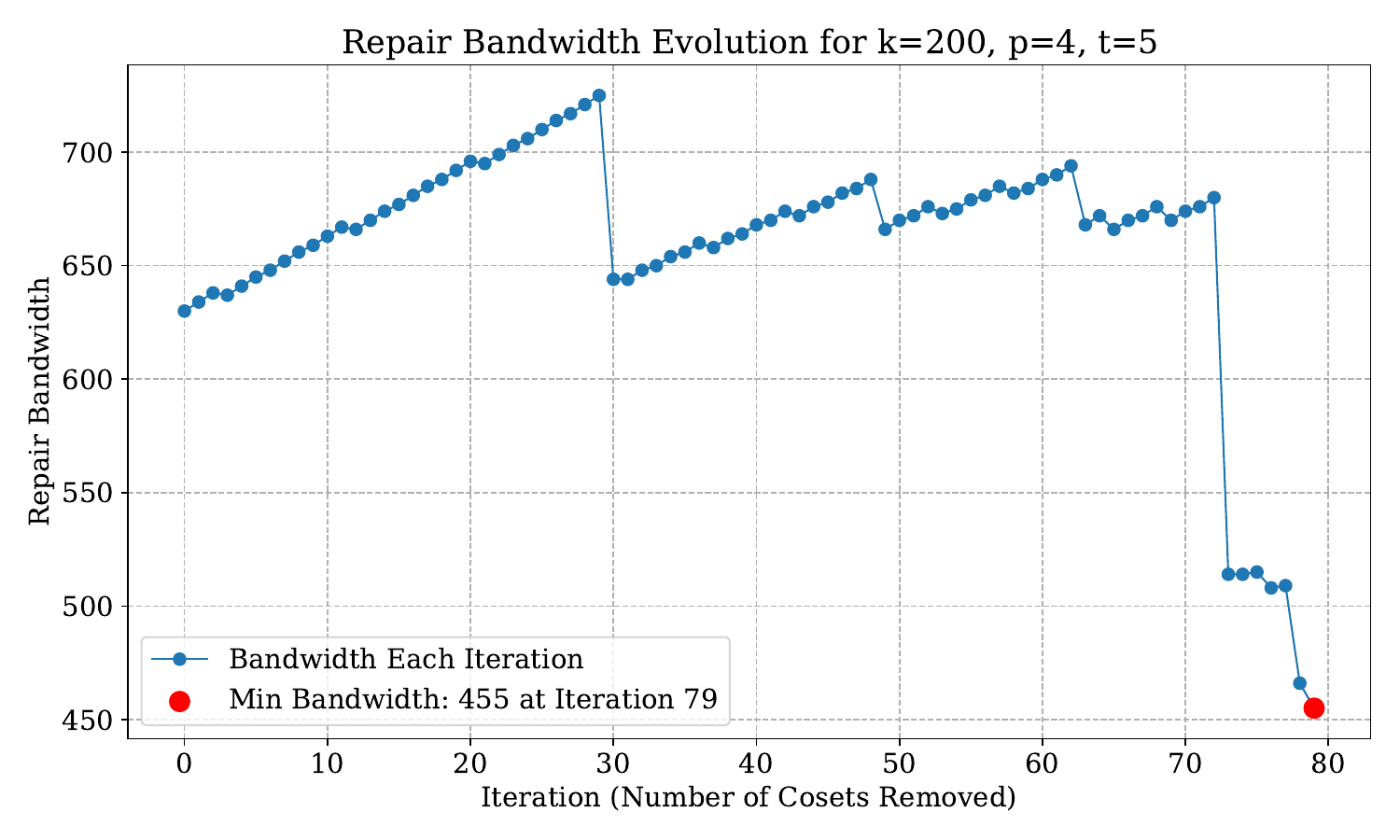}
        \caption{$k=200$}
        \label{fig:big_k_200}
    \end{subfigure}
    
    \caption{Visual comparison of the repair bandwidth as the pruning progresses for various $k$ values with $q = 4$ and $t = 5$. The horizontal axis represents the number of cosets pruned from $\Xi_k^*$. The vertical axis represents the resulting repair bandwidth, while the red dot represents the minimum repair bandwidth. We observe that the graph is shifted as $k$ increases.}
    \label{fig:algo_evo_1}
\end{figure*}

\begin{figure*}[h!]
    \centering

    \begin{subfigure}[b]{0.32\textwidth}
        \centering
        \includegraphics[width=\linewidth]{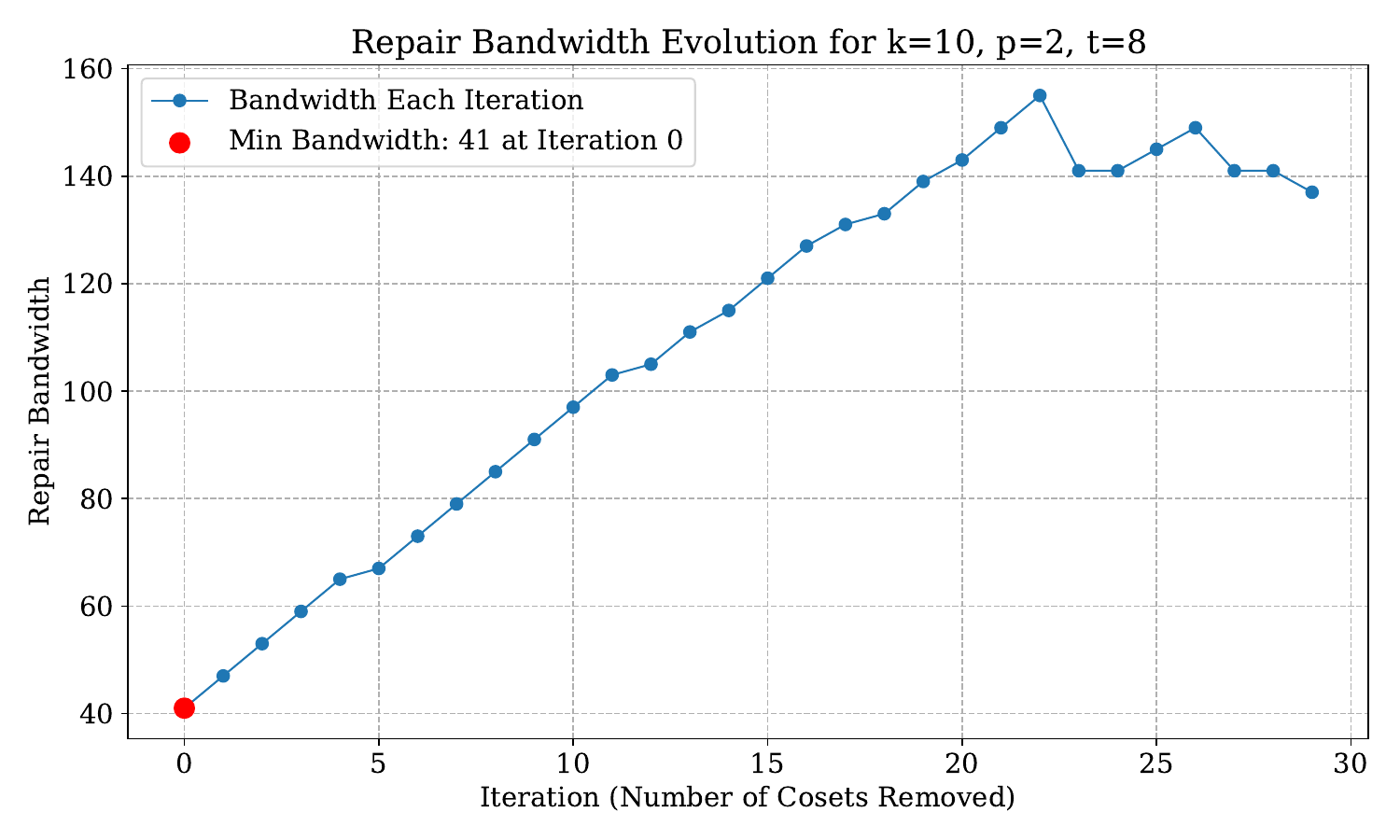}
        \caption{$\BB = \GF(2),\FF =\GF(2^8)$.}
        \label{fig:q=2}
    \end{subfigure}
    \hfill
    \begin{subfigure}[b]{0.32\textwidth}
        \centering
        \includegraphics[width=\linewidth]{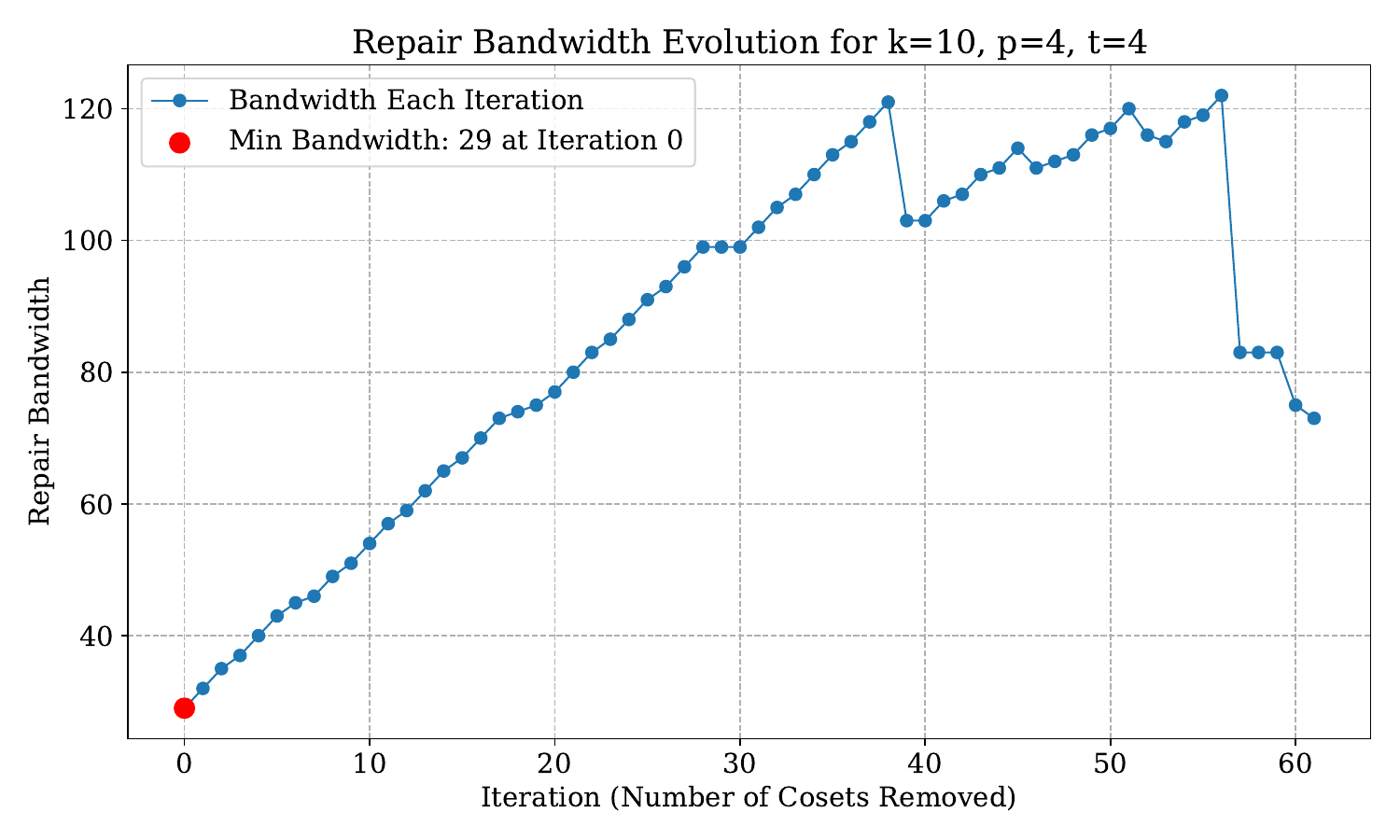}
        \caption{$\BB = \GF(4), \FF = \GF(4^4)$.}
        \label{fig:q=4}
    \end{subfigure}
    \hfill
    \begin{subfigure}[b]{0.32\textwidth}
        \centering
        \includegraphics[width=\linewidth]{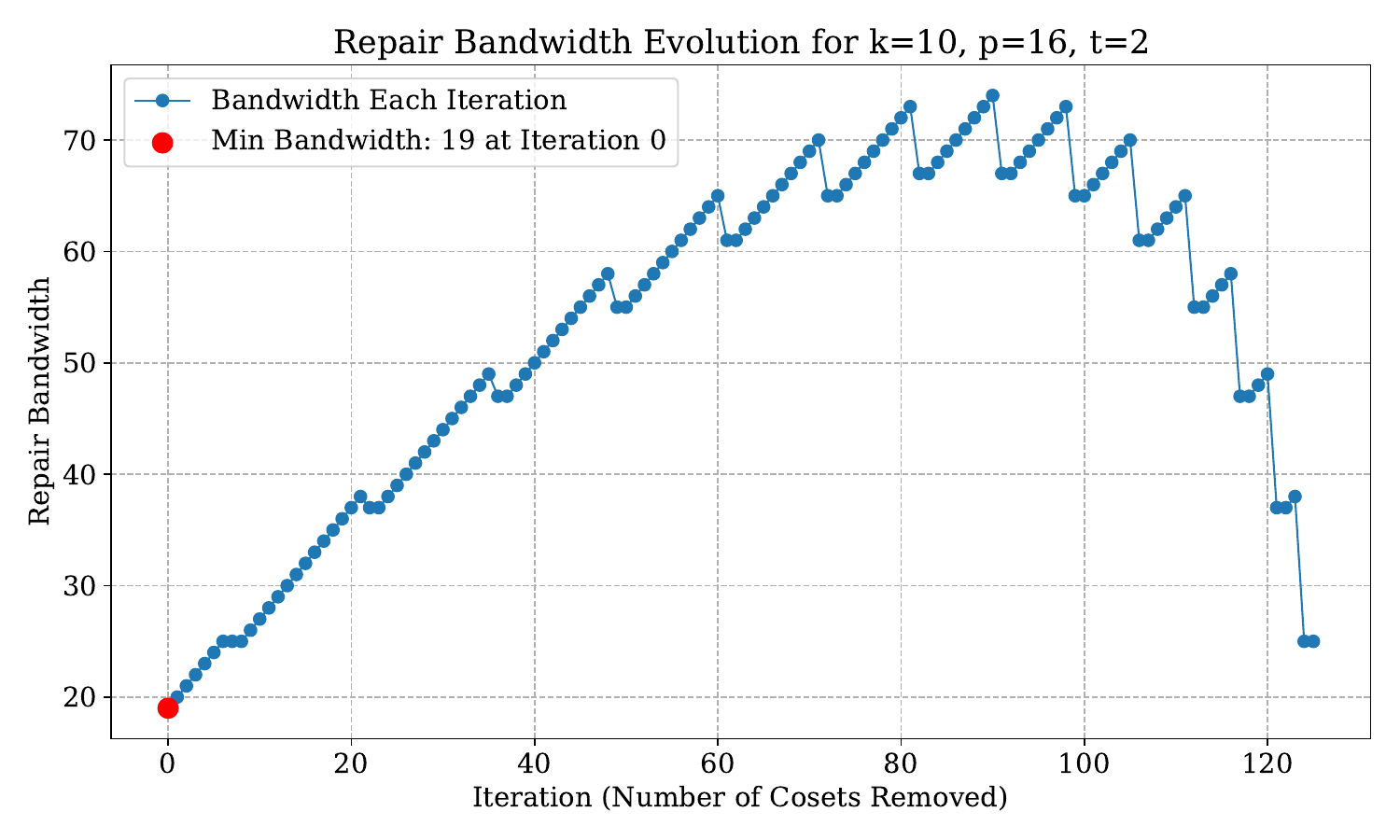}
        \caption{$\BB = \GF(16),\FF = \GF(16^2)$}
        \label{fig:q=16}
    \end{subfigure}
    
    \caption{Visual comparison of the repair bandwidth as the pruning progresses for a fixed $k = 10$ and $\FF = \GF(256)$ but varying $t$. The horizontal axis represents the number of cosets pruned from $\Xi_k^*$. The vertical axis represents the resulting repair bandwidth, while the red dot represents the minimum repair bandwidth. We observe that the graph looks more convex-like as $|\BB|$ increases.}
    \label{fig:algo_evo_2}
\end{figure*}

\section{Explicit Set of Helper Nodes}

\noindent In this section, we show that we can choose which helper nodes to download. We show this formally.  


\begin{corollary}\label{cor:nodes_to_download}
    With the output $(\Xi_{k,|\cS|}^*,|\cS|,R_{\cS})$ from Algorithm~O, consider basis element $\cT_{k,\cS}^*$ for $\cV_{k,\cS}$. Let $d = \dim(\cV_{k,\cS}) = \dim(\cW_{k,\cS})$. Fix $r$ and set $\cI = \{\omega^r,\ldots,\omega^{r+d-1}\}$ and $\cS$ to be any subset of $\FF^*\setminus\cI$ of the given size. Then, by downloading $\tr(g(\alpha)f(\alpha)/\alpha)$ for all $\alpha\in\FF^*\setminus(\cI\cup\cS)$, it is possible to recover $\tr(g(\alpha)f(\alpha)/\alpha)$ for all $\alpha\in\cI$. 
    Hence, we repair $f(0)$ with bandwidth $(n-1-R_\cS)\log|\BB|$ bits.
\end{corollary}
\begin{proof}
    To prove this, we refer back to the proof of Theorem~1. Note that, to perform the repair, we only need to ensure that the matrix $\boldsymbol{M}_\cI$ is invertible with some choice of $\cI$. Indeed, with $\cI = \{\omega^r,\ldots,\omega^{r+d-1}\}$, we have
    \begin{align*}
    \boldsymbol{M}_\cI&= \boldsymbol{VE} = \begin{bmatrix}
        \boldsymbol{V}_1 &\boldsymbol{0} & \cdots & \boldsymbol{0}\\
        \boldsymbol{0} & \boldsymbol{V}_2 & \cdots & \boldsymbol{0}\\
        \vdots & \vdots & \ddots & \vdots\\
        \boldsymbol{0} & \boldsymbol{0} & \cdots & \boldsymbol{V}_{|\Xi_k^*|}
    \end{bmatrix}\boldsymbol{E},
    \end{align*}
    where
    \begin{align*}
    \boldsymbol{E} = \begin{bmatrix}
        \left(\omega^{a^{(1)}}\right)^r & \cdots & \left(\omega^{a^{(1)}}\right)^{r+d-1}\\
        \vdots & \ddots & \vdots \\
        \left(\omega^{a^{(1)}q^{s_1-1}}\right)^r & \cdots & \left(\omega^{a^{(1)}q^{s_1-1}}\right)^{r+d-1}\\
        \vdots & \ddots & \vdots\\
        \left(\omega^{a_{1}^{(|\Xi_k^*|)}}\right)^r & \cdots & \left(\omega^{a_{1}^{(|\Xi_k^*|)}}\right)^{r+d-1}\\
        \vdots & \ddots & \vdots \\
        \left(\omega^{a^{(|\Xi_k^*|)}q^{s_{|\Xi_k^*|}-1}}\right)^r & \cdots & \left(\omega^{a^{(|\Xi_k^*|)}q^{s_{|\Xi_k^*|}-1}}\right)^{r+d-1}
    \end{bmatrix},
    \end{align*}
    and
    \begin{align*}
        \boldsymbol{V}_i = \begin{bmatrix}
            1 & \omega^{a^{(i)}} & \left(\omega^{a^{(i)}}\right)^2 & \cdots & \left(\omega^{a^{(i)}}\right)^{s_i-1}\\
            1 & \omega^{a^{(i)}q} & \left(\omega^{a^{(i)}q}\right)^2 & \cdots & \left(\omega^{a^{(i)}q}\right)^{s_i-1}\\
            \vdots & \vdots & \vdots & \ddots & \vdots\\
            1 & \omega^{a^{(i)}q^{s_i-1}} & \left(\omega^{a^{(i)}q^{s_i-1}}\right)^2 & \cdots & \left(\omega^{a^{(i)}q^{s_i-1}}\right)^{s_i-1}\\
        \end{bmatrix}. 
    \end{align*}
    Clearly, both $\boldsymbol{V}$ and $\boldsymbol{E}$ are invertible. This is because $\boldsymbol{V}$ is a block-diagonal matrix of Vandermonde matrices and $\boldsymbol{E}$ is a Vandermonde matrix. Then, the result follows. 
\end{proof}

With the explicit choice of helper nodes established in Corollary~\ref{cor:nodes_to_download}, we can now summarize the complete repair procedure.

\noindent \textbf{Algorithm X. (Complete Repair Procedure)}. Repair of $c(0)$ with $(n-1-R_\cS)$ symbols of $\BB$.
\begin{enumerate}[label = \textbf{X\arabic*.}, leftmargin=*, labelsep = 0.5em]
    \item \textbf{(Parameter Optimization)} Run \textbf{Algorithm O} to obtain $\Xi_{k,|\cS|}^*$ and $|\cS|$ that maximize $R_\cS$. Let $d = \dim(\cW_{k,\cS}) = \sum_{C\in \cW_{k,\cS}} |C|$.
    \item \textbf{(Excluded Nodes Selection)} Set $\cI = \{1,\omega,\ldots, \omega^{d-1}\}$ and $\cS\subset \FF^*\setminus\cI$ of size $|\cS|$ obtained from \textbf{Algorithm O}.
    \item \textbf{(Trace Retrieval)} Download traces $\tr(g(\alpha)c(\alpha)/\alpha)$ for $\alpha\in\FF^*\setminus(\cI\cup\cS)$.
    \item \textbf{(Trace Reconstruction)} Reconstruct traces $\tr(g(\alpha)c(\alpha)/\alpha)$ for $\alpha\in\cI$ as follows.
    \begin{enumerate}[label = \arabic*.]
        \item Retrieve Trace Repair Compatible polynomials $T_{\ell}^{(i)}$, $\ell\in [0,s_{i,k}^*-1]$ corresponding to the cylcotomic cosets $C_{i,k}^*$ in $\Xi_{k,|\cS|}^*$ for all $i\in[|\Xi_{k,|\cS|}^*|]$.
        \item Set $h_{\ell}^{(i)}(x) = g(x)T_\ell^{(i)}(x)/x$ for all $\ell\in[0,s_{i,k}^*-1],i\in[|\Xi_{k,|\cS|}^*|]$. Here $h_{\ell}^{(i)}(\alpha) = 0$ for all $\alpha\in\cS$ due to $g(x)$.
        \item Consider $d$ parity-check equations and solve for $\tr(g(\alpha)c(\alpha)/\alpha)$ for all $\alpha\in\cI$:
        $$
        \sum_{\alpha \in \FF^*\setminus\cS} h_\ell^{(i)}(\alpha)c(\alpha) = 0 \implies \sum_{\alpha \in \cI} T_\ell^{(i)}(\alpha)\tr(g(\alpha)c(\alpha)/\alpha) = - \sum_{\alpha \in \FF^*\setminus(\cI\cup\cS)} T_\ell^{(i)}(\alpha)\tr(g(\alpha)c(\alpha)/\alpha).
        $$
        Corollary~\ref{cor:nodes_to_download} ensures a unique solution. 
    \end{enumerate}
    \item \textbf{(Symbol Recovery)} Repair $c(0)$ as in Lin \cite{Lin2023} as we have all required traces.
    \begin{enumerate}[label = \arabic*.]
        \item Compute, for all $i\in[t]$,
        $$\tr(u_i g(0) c(0)) = -\sum_{\alpha\in\FF\setminus(\cS\cup\{0\})} \tr(u_i\alpha)\tr\left(\frac{g(\alpha) c(\alpha)}{\alpha}\right).$$
        \item Repair
        $$
        c(0) = (g(0))^{-1}\sum_{i\in[t]}u_i^\perp \tr(u_i g(0) c(0)).
        $$
    \end{enumerate}
\end{enumerate}

\section{Performance Analysis and Comparisons}


In this section, we analyze the performance of our optimized trace-repair scheme ({Algorithm X}) and compare it with the classical repair scheme and recent works by Lin \cite{Lin2023} and Liu et. al. \cite{Liu2024}. Figure~\ref{fig:algorithm_comparison_vary_k} illustrates the repair bandwidth (in base field symbols) for varying dimensions $k$ over different fields $\FF$. The curve labeled ``Our Work (Optimized)'' represents the bandwidth achieved by {Algorithm X}.

\subsection{Universal Improvement over Classical Repair}

A central contribution of this work is the guarantee that trace repair is universally efficient for all $k \le n - q^{t-1}$ in bandwidth.

\begin{theorem}\label{thm:outperform}
    Suppose $\cA = \FF$ and fix $k\le n - q^{t-1}$. Then, we can always repair $f(0)$ with at most $k\log |\FF|$ bits.
\end{theorem}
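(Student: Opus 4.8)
The goal is to show that the trace-repair bandwidth $(n-d-1)\log|\BB|$ is at most $k\log|\FF| = kt\log|\BB|$, i.e.\ that $n-d-1 \le kt$, where $n = q^t$ and $d = \dim(\cW_k) = \sum_{i=1}^{|\Xi_k^*|} s_i = |\Xi_k^*|$ viewed as a set of residues (the union of the relevant cyclotomic cosets). Equivalently, writing $d$ as the number of integers $a \in \{2,3,\ldots,q^t-k\}$ that lie in a cyclotomic coset entirely contained in $\{2,\ldots,q^t-k\}$ and not containing $1$, I need $q^t - 1 - d \le kt$, i.e.\ $d \ge q^t - 1 - kt$.

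First I would reduce $d$ to a clean counting quantity. The set $\{0,1,\ldots,q^t-2\}$ is partitioned by cyclotomic cosets; $\Xi_k^*$ is, by definition, the union of all cosets $C_i$ with every element in $[2, q^t-k]$. So the complement (elements of $[0,q^t-2]$ \emph{not} in $\Xi_k^*$) consists precisely of: the coset $\{0\}$, the coset containing $1$, and every coset containing at least one element in $\{q^t-k+1,\ldots,q^t-2\}$ (equivalently, at least one element exceeding $q^t-k$). Thus $q^t-1-d$ equals $2$ (for the cosets $\{0\}$ and $C(1)$) plus the total size of all cosets meeting $[q^t-k+1, q^t-2]$ — here one must be a little careful if $k$ is small enough that $1$ itself or $0$ lies in that range, but those are easy edge cases. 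Since each coset has size at most $t$, and the interval $[q^t-k+1,q^t-2]$ has at most $k-2$ elements, the cosets meeting it number at most $k-2$ and contribute at most $(k-2)t$ to the count. Hence $q^t-1-d \le 2 + (k-2)t = kt - 2t + 2 \le kt$ whenever $t \ge 1$, which is assumed.

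The one genuinely delicate point — and the step I expect to be the main obstacle — is handling the boundary/degenerate cases cleanly: when $k=1$ the definition of $\cF_k$ (and hence $d$) changes (nonzero $f_0$ allowed, $\{0\}$ now relevant), and one must verify the bound $n-d-1\le t$ directly; and for very small $k\ge 2$ one must check that the interval $[q^t-k+1,q^t-2]$ doesn't wrap around to include $0$ or $1$ in a way that double-counts, and that the claim "$1$ is not downloaded" / the subtraction of the $\{0\}$ coset is accounted for exactly once. I would dispatch $k=1$ as a separate short paragraph (there $d = n - 1 - t$ since only the coset of $1$ is excluded among the size-accounting, giving $n-d-1 = t = \log|\FF|/\log|\BB|$ exactly), and for $k\ge 2$ carry out the counting argument above, noting that the bound is in fact strict unless $t=1$, and even commenting that equality $n-d-1 = kt$ essentially never holds, so trace repair \emph{strictly} beats classical repair except in trivial cases.
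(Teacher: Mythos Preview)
Your approach is essentially the paper's: both arguments count the elements of $[0,q^t-2]$ lying in cosets excluded from $\Xi_k^*$ and use that each such coset has size at most $t$. The paper phrases this as $|\varrho_k|\le k$ (at most $k$ cosets are removed) and hence $n-d-1=\sum_{\psi\in\varrho_k}|\psi|\le t\,|\varrho_k|\le kt$.

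However, one step of your write-up is wrong as stated. You assert that ``$q^t-1-d$ equals $2$ (for the cosets $\{0\}$ and $C(1)$) plus the total size of all cosets meeting $[q^t-k+1,q^t-2]$''. But $|\{0\}\cup C(1)|=1+|C(1)|=1+t$, not $2$: for $t\ge 2$ the coset of $1$ is $\{1,q,\dots,q^{t-1}\}$, not a singleton. Consequently the inequality $q^t-1-d\le 2+(k-2)t$ is false in general; e.g.\ for $q=3$, $t=2$, $k=2$ one has $q^t-1-d=3$ while $2+(k-2)t=2$. The repair is immediate: replace $2$ by $1+t$ to obtain
\[
q^t-1-d \;\le\; 1+t+(k-2)t \;=\; 1+(k-1)t \;\le\; kt.
\]
With this correction your argument goes through, and in fact delivers the slightly sharper bound $n-d-1\le 1+(k-1)t$, which also justifies your closing remark that the inequality is strict whenever $t\ge 2$ and $k\ge 2$. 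Your treatment of the case $k=1$ is correct as written.
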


\begin{proof}
        It is sufficient to prove that the framework with $g(x) = 1$ can repair $f(0)$ with at most $k\log|\FF|$ bits. The number of nodes involved in the repair scheme is the total of the number of elements in all cyclotomic coset we remove. Formally, given $k$, the number of nodes involved is
        $$
        n-1-\dim(\cW_k) = \sum_{C\in\Xi}|C| - \sum_{C\in\Xi_{k}^*}|C| = \sum_{C\in\Xi\setminus\Xi_k^*}|C|
        $$
        Note that, each cyclotomic has size at most $t$. Hence,
        $$
        \sum_{C\in\Xi\setminus\Xi_k^*}|C| \le t |\Xi\setminus\Xi_k^*|.
        $$
        \begin{itemize}[leftmargin=*]
        \item When $k\ge 2$, we exclude the cyclotomic coset $\{0\}$, cyclotomic coset with entry $1$, and all cyclotomic cosets with some entry more than $q^t-k$. Since the maximum entry of the coset is $q^t-2$, we remove at most $k$ cyclotomic cosets. In other words, $|\Xi\setminus\Xi_k^*|\le k$ when $k\ge 2$.
        \item When $k=1$, we exclude only one cyclotomic coset with entry $1$.
        So, we also have $|\Xi\setminus\Xi_1^*| = 1 \le k$.
        \end{itemize}
        Hence, the bandwidth of the trace-mapping framework is
        $$
        \left(n-\dim(\cW_k) - 1\right)\log|\BB|\le kt \log|\BB| = k\log|\FF|.
        $$
\end{proof}
As stated in Theorem~\ref{thm:outperform}, our scheme requires a bandwidth of at most $k \log |\FF|$ bits. This ensures that the trace repair framework never consumes more bandwidth than the classical repair scheme, regardless of the dimension $k$. Empirically, this is evident in Figure~\ref{fig:bandwidth_comparison_4^2}, \ref{fig:bandwidth_comparison_8^2}, and \ref{fig:bandwidth_comparison_2^8}, the repair bandwidth of our scheme remains strictly below (when $k>1$) or equal (when $k=1$) to the classical repair bandwidth. While the Guruswami-Wootters scheme \cite{guruswamiwooters2017} only outperforms classical repair when $k > (n-1)/t$, our approach outperforms for all $k>1$.

\subsection{Advancement over State-of-the-Art}

\noindent \underline{\textit{Comparison with Liu et al. \cite{Liu2024}.}} Liu et al. identified that linear dependencies among downloaded traces could reduce bandwidth, relating this reduction to the dimension of a subspace $\cW_k$. However, their work primarily established existence results and provided lower bounds on $\dim(\mathcal{W}_k)$ for specific fields. In contrast, our work exactly quantifies this the dimension. By establishing a linear isomorphism between the subspace $\cW_{k,\cS}$ and the space of Trace-Repair Compatible polynomials $\cV_{k,S}$ (Theorem~\ref{thm:dim_V}), we determine the exact dimension of the subspace $\cW_{k,\cS}$ by constructing a basis for $\cV_{k,\cS}$. This observation also allows us to move beyond existence proofs to an explicit construction of helper nodes (Corollary~\ref{cor:nodes_to_download}), ensuring the promised bandwidth is achievable in practice (see {Algorithm X}). In Fig.~\ref{fig:bandwidth_comparison_4^2} and~\ref{fig:bandwidth_comparison_8^2}, the curve ``Our Work (Case $g(x)=1$)'' corresponds to this exact dimension computation when no zero-forcing is applied ($\cS=\emptyset$), which matches or improves upon the bounds provided by Liu et al.

\noindent\underline{\textit{Comparison with Lin \cite{Lin2023}}.} Lin proposed the zero-forcing method, utilizing a polynomial $g(x)$ to exclude a set of nodes $\cS$ from the repair process entirely. However, Lin's approach is constrained to a fixed set size of $|\cS| = n - k - q^{t-1}$. This rigid choice imposes strict degree constraints that eliminate the possibility of exploiting linear dependencies in the traces (i.e., $\dim(\mathcal{W}_{k,S})$ becomes $0$). Our work generalizes this by treating $|\cS|$ as a tunable parameter. 
Using {Algorithm~O}, we identify a trade-off between the exclusion set size $|\cS|$ and the subspace dimension $\dim(\mathcal{W}_{k,S})$. By iteratively pruning valid cyclotomic cosets, we search for the optimal configuration that maximizes the total number of excluded nodes $R_\cS = |\cS| + \dim(\cW_{k,\cS})$. This optimization is clearly visualized in Figure~\ref{fig:algorithm_comparison_vary_k}. 
Lin's scheme performs well in specific ranges but is outperformed by our optimized scheme in others.

\bibliographystyle{plain}
\bibliography{ref_list.bib}
\end{document}